\DeclareMathAlphabet\mathbfcal{OMS}{cmsy}{b}{n}
\definecolor{dark_blue_pers}{RGB}{46,87,144}
\definecolor{blue_pers}{RGB}{54,104,171}
\definecolor{grey_pers}{RGB}{245,245,245}
\definecolor{red_pers}{RGB}{213,78,33}
\newtheorem{problem}{Problem}
\newtheorem{remark}{Remark}
\newtheorem{proposition}{Proposition}
\definecolor{dark_blue}{RGB}{46,87,144}
\definecolor{dark_green}{RGB}{0,100,0}
\newcommand{\nc}{\newcommand}
\nc{\norm}[2]{\left\|#1\right\|_{#2}}
\algnewcommand\algorithmicforeach{\textbf{for each:}}
\algnewcommand\ForEach{\item[ \algorithmicforeach]}
\nc{\IC}{\mathbb{C}}
\nc{\IE}{\mathbb{E}}
\nc{\IN}{\mathbb{N}}
\nc{\IR}{\mathbb{R}}
\nc{\IH}{\mathbb{H}}
\nc{\IX}{\mathbb{X}}
\nc{\be}{\begin{equation}}
\nc{\ee}{\end{equation}}
\nc{\mB}{\mathcal{B}}
\nc{\mE}{\mathcal{E}}
\nc{\mI}{\mathcal{I}}
\nc{\mK}{\mathcal{K}}
\nc{\mN}{\mathcal{N}}
\nc{\mL}{\mathcal{L}}
\nc{\mT}{\mathcal{T}}
\nc{\mO}{\mathcal{O}}
\nc{\mH}{\mathcal{H}}
\nc{\mS}{\mathcal{S}}
\renewcommand{\d}{\operatorname{d}\!}
\nc{\tmI}{\widetilde{\mathcal{I}}}
\nc{\bxi}{\bm{\xi}}
\DeclareMathOperator{\sign}{sign}
\nc{\dofs}{\texttt{dofs}}
\DeclareMathAlphabet\bfcal{OMS}{cmsy}{b}{n}
\newcommand{\bmI}{\boldsymbol{\mathcal{I}}}
\DeclareMathAlphabet\mathbfcal{OMS}{cmsy}{b}{n}
\newsavebox{\@brx}
\newcommand{\llangle}[1][]{\savebox{\@brx}{\(\m@th{#1\langle}\)}%
  \mathopen{\copy\@brx\kern-0.5\wd\@brx\usebox{\@brx}}}
\newcommand{\rrangle}[1][]{\savebox{\@brx}{\(\m@th{#1\rangle}\)}%
  \mathclose{\copy\@brx\kern-0.5\wd\@brx\usebox{\@brx}}}
\newcommand{\bmP}{\boldsymbol{\mathcal{P}}}
\newcommand{\bmA}{\boldsymbol{\mathcal{A}}}
\newcommand{\hbmA}{\mathbfcal{\hat{A}}}
\newcommand{\tbmA}{\mathbfcal{\tilde{A}}}
\newcommand{\bmM}{\boldsymbol{\mathcal{M}}}
\newcommand{\eps}{\varepsilon}
\newcommand{\divg}{\operatorname{div}}
\newcommand{\bcurl}{\operatorname{\bf{curl}}}
\newcommand{\bgamma}{{\boldsymbol{\gamma}}}
\newcommand{\bmX}{\bm{\mathcal{X}}}
\nc{\bx}{{\bf x}}
\nc{\by}{{\bf y}}
\nc{\bn}{{\bf n}}
\nc{\bq}{{\bf q}}
\nc{\bb}{{\bf b}}
\nc{\bz}{{\bf z}}
\nc{\brr}{{\bf r}}
\nc{\bu}{{\bf u}}
\nc{\bd}{{\bf d}}
\nc{\bv}{{\bf v}}
\nc{\bp}{{\bf p}}
\nc{\bff}{{\bf f}}
\nc{\bM}{{\bf M}}
\nc{\bW}{{\bf W}}
\nc{\bE}{{\bf E}}
\nc{\bV}{{\bf V}}
\nc{\bH}{{\bf H}}
\nc{\bL}{{\bf L}}
\nc{\bI}{{\bf I}}
\nc{\bP}{{\bf P}}
\newcommand{\bsH}{\boldsymbol{H}}
\newcommand{\bsL}{\boldsymbol{L}}
\newcommand{\bpsi}{\boldsymbol{\psi}}
\newcommand{\bphi}{\boldsymbol{\phi}}
\nc{\bU}{{\bf U}}
\nc{\btheta}{{\boldsymbol{\theta}}}
\DeclareFontFamily{U}{mathx}{\hyphenchar\font45}
\DeclareFontShape{U}{mathx}{m}{n}{
      <5> <6> <7> <8> <9> <10>
      <10.95> <12> <14.4> <17.28> <20.74> <24.88>
      mathx10
      }{}
\DeclareSymbolFont{mathx}{U}{mathx}{m}{n}
\DeclareMathSymbol{\bigtimes}{1}{mathx}{"91}
\nc{\loc}{{_\textup{loc}}}
\journal{Journal of Computational Physics}
\begin{document}

\begin{frontmatter}
%\title{Local Multiple Traces Formulation for Heterogeneous Electromagnetic Scattering: Implementation and Preconditioning}
%\title{Local Multiple Traces Formulation for Electromagnetic Scattering by Complex Objects: Implementation and Preconditioning}

\title{Local Multiple Traces Formulation for Heterogeneous Electromagnetic Scattering: Implementation and Preconditioning}

\author[inria]{Paul Escapil-Inchausp\'e}
\ead{paul.escapil@inria.cl}
%\author[ucl]{Timo Betcke}
%\author[ucl]{Matthew Scroggs}
\author[BATH]{Carlos Jerez-Hanckes\corref{cor1}}
\ead{cjh239@bath.ac.uk}

\address[inria]{Inria Chile, Santiago, Chile}
%\address[ucl]{University College London, Department of Mathematics, London, UK}
\address[BATH]{Department of Mathematical Sciences, University of Bath, UK}
\cortext[cor1]{Corresponding author}

\begin{abstract}We consider the three-dimensional time-harmonic electromagnetic (EM) wave scattering transmission problem involving heterogeneous scatterers. The fields are approximated using the local multiple traces formulation (MTF), originally introduced in \cite{hiptmairjerez2012multiplehelmholtz} for acoustic scattering. This scheme assigns independent boundary unknowns to each subdomain and weakly enforces Calderón identities along with interface transmission conditions. As a result, the MTF effectively handles shared points or edges among multiple subdomains, while supporting various preconditioning and parallelization strategies. Nevertheless, implementing standard solvers presents significant challenges, particularly in managing the degrees of freedom associated with subdomains and their interfaces. To address these difficulties, we propose a novel framework that suitably defines approximation spaces and enables the efficient exchange of normal vectors across subdomain boundaries. This framework leverages the skeleton mesh, representing the union of all interfaces, as the computational backbone, and constitutes the first scalable implementation of the EM MTF. Furthermore, we conduct several numerical experiments, exploring the effects of increasing subdomains and block On-Surface-Raditation-Condition (OSRC) preconditioning, to validate our approach and provide insights for future developments.
\end{abstract}
%%Graphical abstract

\begin{keyword}
Maxwell scattering, multiple traces formulation, boundary element methods, transmission problem, operator preconditioning
%% keywords here, in the form: keyword \sep keyword
%% PACS codes here, in the form: \PACS code \sep code
%% MSC codes here, in the form: \MSC code \sep code
%% or \MSC[2008] code \sep code (2000 is the default)
\end{keyword}

\end{frontmatter}
%\tableofcontents l
%% main text
\section{Introduction}\label{sec:intro}
We address the solution of electromagnetic (EM) wave transmission problems involving heterogeneous, open, and bounded scatterers $\Omega_S\in\mathbb{R}^3$. Specifically, we consider objects such that, for $M\in\mathbb{N}$, one can write 
$$\overline{\Omega}_S:=\bigcup_{i=1}^M\overline{\Omega}_i, \quad\Omega_0:=\mathbb{R}^3\setminus\overline{\Omega}_S,\quad \Gamma_i:=\partial\Omega_i,$$
with interfaces $\Gamma_{ij}:=\Gamma_i\cap\Gamma_j$ (cf.~Figure \ref{fig:Skeleton}), and for which physical parameters values inside each subdomain $\Omega_i$ vary. Such problems remain a significant challenge within the EM community, particularly, in view of the ever-wider use of the EM spectrum. Among various modeling approaches, boundary integral equations (BIEs) and their different discretization methods---such as boundary elements, spectral, Nystr\"om, collocation---are regularly employed to simulate EM fields, largely due to their ability to handle unbounded media effectively \cite{Nedelec,buffa2003boundary,buffa2003galerkin}, with many available commercial \cite{martyanov2014ansys,solvers2020cst}, and open-source  \cite{betcke2021bempp,DOLZ2020100476,krcools_2024_11174547,alouges2018fem} solvers. %\pe{Their applications range from optical surface metrology \cite{hooshmand2025comparison}.}

The Poggio-Miller-Chang-Harrington-Wu-Tsai (PMCHWT) formulation \cite{POM73,PMCHWT} is an ubiquitous choice for modeling EM transmission scattering problems involving multiple, possibly non-disjoint, homogeneous subdomains. In this formulation, a \emph{single pair} of electric and magnetic currents is assigned per interface $\Gamma_{ij}$ and, consequently, these currents must be reoriented when computing couplings between interfaces. Although the resulting PMCHWT matrices are inherently poorly conditioned, the formulation supports the use of multiplicative (operator) Calderón preconditioning \cite{CalderonPMCHWT,buffa2005remarks,claeys2012multi} and accelerated versions \cite{kleanthous2019calderon,KLEANTHOUS2022111099,ESCAPILINCHAUSPE2021220,escapil2019fast} for \emph{disjoint scatterers}. However, a significant limitation of the PMCHWT formulation is the lack of clear preconditioning strategies in cases involving several adjacent subdomains\footnote{Throughout, we will refer to such scatterers as \emph{complex}.}---i.e.,~when an edge or a vertex is shared by more than two subdomains (cf.~Figure \ref{fig:Skeleton}). The need for parallel and preconditioning-ready numerical schemes to handle wave scattering complex objects has driven substantial research in this area over the past decade.

\tikzset{every picture/.style={line width=0.75pt}}
\begin{figure}[t]
\centering
\tikzset{every picture/.style={line width=0.75pt}} %set default line width to 0.75pt        

\begin{tikzpicture}[x=0.75pt,y=0.75pt,yscale=-1,xscale=1]
%uncomment if require: \path (0,300); %set diagram left start at 0, and has height of 300

%Straight Lines [id:da9140140674909849] 
\draw    (332.82,34) -- (265.71,86.73) ;
%Straight Lines [id:da8844674941073254] 
\draw    (332.82,34) -- (391.95,86.73) ;
%Straight Lines [id:da1558649759805737] 
\draw    (292.88,163.43) -- (273.7,208.17) ;
%Straight Lines [id:da7528338594634509] 
\draw    (292.88,163.43) -- (265.71,86.73) ;
%Straight Lines [id:da34447765375310424] 
\draw    (382.36,166.62) -- (391.95,86.73) ;
%Straight Lines [id:da3614603919942545] 
\draw    (273.7,208.17) -- (356.79,233.73) ;
%Straight Lines [id:da05325079553570333] 
\draw [color={rgb, 255:red, 46; green, 77; blue, 152 }  ,draw opacity=1 ]   (292.88,163.43) -- (382.36,166.62) ;
%Straight Lines [id:da032272736850783135] 
\draw    (356.79,233.73) -- (382.36,166.62) ;
%Straight Lines [id:da5127550328264832] 
\draw    (382.36,166.62) -- (436.69,190.59) ;
%Straight Lines [id:da6409410289792727] 
\draw [color={rgb, 255:red, 46; green, 77; blue, 152 }  ,draw opacity=1 ]   (391.95,86.73) -- (436.69,190.59) ;
%Straight Lines [id:da008684577927532411] 
\draw    (356.79,233.73) -- (436.69,190.59) ;
%Straight Lines [id:da9487701063322951] 
\draw [color={rgb, 255:red, 65; green, 117; blue, 5 }  ,draw opacity=1 ]   (337.62,165.02) -- (336.34,181.87) ;
\draw [shift={(336.11,184.86)}, rotate = 274.34] [fill={rgb, 255:red, 65; green, 117; blue, 5 }  ,fill opacity=1 ][line width=0.08]  [draw opacity=0] (6.25,-3) -- (0,0) -- (6.25,3) -- cycle    ;
%Shape: Circle [id:dp9706441869102382] 
\draw  [draw opacity=0][fill={rgb, 255:red, 248; green, 82; blue, 28 }  ,fill opacity=1 ] (289.14,163.43) .. controls (289.14,161.36) and (290.81,159.68) .. (292.88,159.68) .. controls (294.94,159.68) and (296.62,161.36) .. (296.62,163.43) .. controls (296.62,165.49) and (294.94,167.17) .. (292.88,167.17) .. controls (290.81,167.17) and (289.14,165.49) .. (289.14,163.43) -- cycle ;
%Shape: Ellipse [id:dp08905541637202141] 
\draw  [draw opacity=0][fill={rgb, 255:red, 248; green, 82; blue, 28 }  ,fill opacity=1 ] (353.05,233.73) .. controls (353.05,231.67) and (354.73,229.99) .. (356.79,229.99) .. controls (358.86,229.99) and (360.53,231.67) .. (360.53,233.73) .. controls (360.53,235.8) and (358.86,237.48) .. (356.79,237.48) .. controls (354.73,237.48) and (353.05,235.8) .. (353.05,233.73) -- cycle ;
%Shape: Ellipse [id:dp6118933868298511] 
\draw  [draw opacity=0][fill={rgb, 255:red, 248; green, 82; blue, 28 }  ,fill opacity=1 ] (378.62,166.62) .. controls (378.62,164.56) and (380.29,162.88) .. (382.36,162.88) .. controls (384.43,162.88) and (386.1,164.56) .. (386.1,166.62) .. controls (386.1,168.69) and (384.43,170.36) .. (382.36,170.36) .. controls (380.29,170.36) and (378.62,168.69) .. (378.62,166.62) -- cycle ;
%Shape: Circle [id:dp16036239452494017] 
\draw  [draw opacity=0][fill={rgb, 255:red, 248; green, 82; blue, 28 }  ,fill opacity=1 ] (388.2,86.73) .. controls (388.2,84.66) and (389.88,82.99) .. (391.95,82.99) .. controls (394.01,82.99) and (395.69,84.66) .. (395.69,86.73) .. controls (395.69,88.8) and (394.01,90.47) .. (391.95,90.47) .. controls (389.88,90.47) and (388.2,88.8) .. (388.2,86.73) -- cycle ;
%Shape: Ellipse [id:dp2250434148224416] 
\draw  [draw opacity=0][fill={rgb, 255:red, 248; green, 82; blue, 28 }  ,fill opacity=1 ] (432.94,190.59) .. controls (432.94,188.52) and (434.62,186.85) .. (436.69,186.85) .. controls (438.75,186.85) and (440.43,188.52) .. (440.43,190.59) .. controls (440.43,192.66) and (438.75,194.33) .. (436.69,194.33) .. controls (434.62,194.33) and (432.94,192.66) .. (432.94,190.59) -- cycle ;

% Text Node
\draw (230.21,140.17) node [anchor=north west][inner sep=0.75pt]    {$\Omega _{0}$};
% Text Node
\draw (316.96,59.71) node [anchor=north west][inner sep=0.75pt]    {$\Omega _{1}$};
% Text Node
\draw (325.76,194.23) node [anchor=north west][inner sep=0.75pt]    {$\Omega _{2}$};
% Text Node
\draw (379.81,179.14) node [anchor=north west][inner sep=0.75pt]    {$\Omega _{3}$};
% Text Node
\draw (388.61,140.17) node [anchor=north west][inner sep=0.75pt]    {$\Omega _{4}$};
% Text Node
\draw (300.87,133.89) node [anchor=north west][inner sep=0.75pt]  [color={rgb, 255:red, 46; green, 77; blue, 152 }  ,opacity=1 ]  {$\textcolor[rgb]{0.18,0.3,0.6}{\Gamma _{12}}$};
% Text Node
\draw (307.15,165.26) node [anchor=north west][inner sep=0.75pt]    {$\textcolor[rgb]{0.25,0.46,0.02}{\mathbf{n}_{1}}$};
% Text Node
\draw (416.53,107.49) node [anchor=north west][inner sep=0.75pt]  [color={rgb, 255:red, 46; green, 77; blue, 152 }  ,opacity=1 ]  {$\textcolor[rgb]{0.18,0.3,0.6}{\Gamma }\textcolor[rgb]{0.18,0.3,0.6}{_{04}}$};

\end{tikzpicture}
\caption{Example of an admissible partition with $M=4$. Junction points are represented in red.}
\label{fig:Skeleton}
\end{figure}
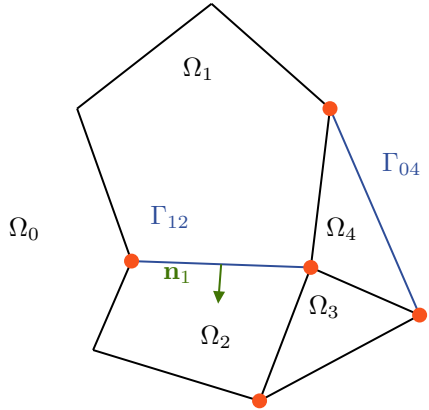

Introduced more than a decade ago, Multiple Trace Formulations (MTFs) are a class of BIEs that have proven particularly advantageous for addressing composite or heterogeneous wave scattering. They support both low-order and spectral discretization techniques \cite{hiptmairjerez2012multiplehelmholtz,claeys2014novel,JPT14}, facilitate the coupling of multiphysics problems \cite{HJA17,HJH18,MJP24}, and enable diagonal and operator preconditioning for complex geometries. These methods include the original local MTF \cite{hiptmairjerez2012multiplehelmholtz,HJL14, MR3069956}, the global version \cite{claeys2012multi}, and its variants \cite{coolsMTF,claeys2015quasi}. The term ``MTF" arises from treating each interior subdomain boundary trace as an independent unknown, thereby weakly enforcing the corresponding transmission conditions. In contrast, the PMCHWT can be regarded as a \emph{single trace formulation} (STF) in opposition to the MTF approach \cite{claeys2014novel,AYALA2022114356}. 
In acoustics and electromagnetics, all MTFs, except for the EM local MTF, have been proven to be well posed. However, while global MTF versions involve dense and computationally expensive cross-interaction terms, the local MTF leads to sparse interactions. Therefore, our focus will be exclusively on the local MTF.

The local MTF weakly enforces the Calder\'on identities within each subdomain and the transmission conditions across each interface. Let $\bu$ represent the vector of $M+1$ pairs of electric and magnetic boundary currents, and $\bff$ the corresponding source term. The local MTF results in a system of operator equations of the form:
\be \label{eq
} (2\bmA + \bmX) \bu = \bff, \ee where $\bmA:= \text{diag} (\hbmA_i)$ is a diagonal matrix operator containing the Calderón operators for each subdomain \eqref{eq:BOmultitraceI}, which are equivalent to Dirichlet-to-Neumann and Neumann-to-Dirichlet maps. The operator $\bm{\mathcal{X}}$ represents the sparse off-diagonal block transmission operator (see \Cref{rem:X}). Precise definitions will be provided later in \Cref{prob:MTF}. As mentioned, the transmission operator $\bmX$ enforces the transmission conditions in their strong form, which are expressed in weak form as duality products over interfaces $\Gamma_{ij}$:
$$
\int_{\Gamma_{ij}} \bu_i \cdot (\bn_{ij} \times \bv_j ) \d \Gamma_{ij},
$$
where $\bu_i$ is defined over $\Gamma_i$ and $\bv_j$ over $\Gamma_j$, respectively, requiring the suitable definition of boundary-to-interface restriction and extension-by-zero operators. Specifically, these operators map from $\Gamma_i$ to $\Gamma_{ij}$ and from $\Gamma_{ij}$ to $\Gamma_j$. One strength of the local MTF is that it can be accelerated through block preconditioning. For instance, one can leverage Calderón identities $(2\hbmA_i)^2 = \bmI_i$ \cite{betcke2020product}, or directly invert each submatrix $\mathbf{A}_i$ associated with the discretization of $\hbmA_i$ using LU factorization, as performed in the seminal paper \cite{hiptmairjerez2012multiplehelmholtz}. Yet, this can quickly become infeasible when dealing with three-dimensional wave scattering, as the submatrices are large, e.g., our examples in \Cref{subsec:complex} are of order $10^4$ degrees of freedom. Interestingly, a compromise can be achieved by applying a preconditioner $\bmP_i$ to each sub-block such that $\bmP_i \hbmA_i \approx \bmI_i $ remains a compact perturbation to the identity. 

As expected, the local MTF has a well-established connection with domain decomposition solvers, and in particular, with Schwarz methods \cite{CLAEYS201969, claeys2022nonlocal}. However, proving the stability of the local MTF in EM scattering remains a challenging task, despite promising results reported in \cite{AYALA2022114356}. Therein, the authors considered the case of a single subdomain ($M=1$), for which they demonstrated well-posedness and validated their findings through successful experiments on a non-smooth, non-convex domain---the Fichera cube. The challenges for the EM local MTF are twofold: (i) establishing theoretical well-posedness for $M>1$, and (ii) addressing practical issues in assembling the operator, including factors such as normal orientation, subdomain restrictions, and the handling of junction points. Based on the results in \cite{AYALA2022114356}, there is strong evidence that the local MTF is well-posed for multiple subdomains. However, a formal proof of this lies beyond the scope of the present manuscript.

In this work, we introduce, for the first time, a general local MTF for EM wave scattering and propose a novel framework to define spaces over subdomains and adjust the orientation of normal vectors based on the \emph{skeleton} (union of interfaces) mesh. This framework enables proper assembly of the local MTF operator and facilitates simulation of complex objects. Furthermore, we experimentally verify that the local MTF converges even in the presence of triple points. Instead of the standard multiplicative Calder\'on preconditioning, we present the first implementation of On-Surface-Radiation-Condition (OSRC) preconditioning  \cite{kriegsmann1987new} for MTF. Specifically, we employ the Padé-based OSRC operator preconditioning \cite{milinazzo1997rational,antoine2007generalized,el2014approximate,van2022frequency}, which achieves mesh-optimal and frequency-robust schemes \cite{van2022frequency}. This approach has been applied to the electric field integral equation \cite{fierro2023osrc}, acoustics \cite{van2022frequency} and elastodynamics \cite{chaillat2015approximate,darbas2015well}. While OSRC preconditioning may result in a comparable or slightly higher iteration count than standard Calder\'on-type counterparts, it is generally more computationally efficient as it does not require a barycentric refinement. Moreover, OSRC is especially effective for smooth domains and has demonstrated strong performance on closed domains with corners \cite{fierro2023osrc}. Additionally, block OSRC can be applied to the Calderón (or multitrace) operators $\hbmA_i$ defined in \eqref{eq:BOmultitraceI}, as described in \cite{van2022frequency} for acoustic scattering.

In addition to the above points, we investigate the behavior of the local MTF operator with an increasing number of artificial subdomains $M$, specifically focusing on how to preserve these desirable attributes through effective preconditioning strategies. The key practical findings are as follows:
\begin{enumerate}[label=(\roman*)]
    \item The local MTF for EM scattering performs well in conjunction with classical Rao-Wilton-Glisson (RWG) \cite{rao1982electromagnetic} or Raviart-Thomas \cite{raviart2006mixed} discrete function spaces;
\item The OSRC block preconditioning facilitates the development of convergent schemes for GMRES \cite{saad1986gmres};
\item As the number of domains $M$ increases, the execution times of the local MTF scales as $M^{2.20}$ (or $M^{1.72}$ by normalizing by total degrees of freedom), as shown in \Cref{tab:summaryFinal}. 
\end{enumerate}

The manuscript is structured as follows: In \Cref{sec:Problem setting}, we set notations and introduce the EM wave transmission problem for composite or heterogeneous materials. In \Cref{sec:BIEs}, we present the BIE framework and define the corresponding local MTF following \cite{hiptmairjerez2012multiplehelmholtz}. In \Cref{sec:Implementation}, we discuss the practical implementation for the MTF. Thus, in \Cref{sec:NumExp} we conduct extensive numerical experiments with increasingly complex objects, and future research directions are discussed in \Cref{sec:conclusions}.

%%%%%%%%%%%%%%%%%%%%%%%%%%%%%%
\section{Problem setting}
\label{sec:Problem setting}
%%%%%%%%%%%%%%%%%%%%%%%%%%%%%%  
%%%%%%%%%%%%%%%%%%%%%%%%%%%%%%
\subsection{Geometry and parameters}
\label{subsec:Notations}
%%%%%%%%%%%%%%%%%%%%%%%%%%%%%%
For any $M \geq 1$, we consider the three-dimensional time-harmonic electromagnetic scattering by a bounded composite object $\Omega_S$, partitioning the free space into:
\be 
\IR^3   =   \bigcup_{i=0}^M \overline{\Omega}_i = \overline{\Omega}_0  \cup \overline{\Omega}_S,
\ee 
wherein, for each $i \in \{ 0,\ldots,M\}$, we assume that $\Omega_i$ is an open connected Lipschitz domain with boundary $\Gamma_i := \partial\Omega_i$, exterior unit normal $\bn_i$, and complement $\Omega_i^c:= \IR^3 \backslash \overline{\Omega}_i$. We define interfaces $\Gamma_{ij}:=\Gamma_i \cap \Gamma_j$, with unit normal $\bn_{ij}$ pointing from $\Gamma_i$ to $\Gamma_j$, for $i,j \in \{0,\ldots,M\}$, and introduce the index set \cite[Eq.~7]{hiptmairjerez2012multiplehelmholtz}:
$$
\Lambda_i := \left\{ j \in \{0,\ldots,M\} :  j \neq i \quad \text{and} \quad \Gamma_{ij} \neq \emptyset \right\}, \quad i \in \{0,\ldots,M\}.
$$
The skeleton is given by
\be
\label{eq:skeleton}
\Sigma :=  \bigcup_{i=0}^M \Gamma_i = \bigcup_{0 \leq i < j \leq M} \Gamma_{ij} =  \bigcup_{i=0}^M \left( \bigcup_{j \in \Lambda_i} \Gamma_{ij} \right).
\ee
We assume an angular frequency $\omega : = 2 \pi f$ for a time dependence $\exp(-\imath \omega t)$, $f > 0 $ and $\imath^2 = -1$. For $i \in \{ 0,\ldots,M\}$, we set each subdomain with constant permittivity and permeability $\epsilon_i ,\upmu_i$, along with wavenumbers $k_i:= \omega \sqrt{\upmu_i \epsilon_i}$ and an exterior wavelength $\lambda := \frac{2\pi}{k_0}$. Furthermore, let us set relative permittivities, permeabilities, impedances and admittances: 
\be \label{eq:relativeParameters}
\epsilon_{r,i} :=\frac{\epsilon_i}{\epsilon_0} , \quad \upmu_{r,i} :=\frac{\upmu_i}{\upmu_0} ,\quad  \eta_i := \sqrt{\frac{\upmu_{r,i}}{\epsilon_{r,i}}} \quad \text{and} \quad \varrho_i : = \sqrt{\frac{\epsilon_{r,i}}{\upmu_{r,i}}} = \frac{1}{\eta_i},
\ee
respectively.

%%%%%%%%%%%%%%%%%%%%%%%%%%%%%%
\subsection{Functional spaces and traces}
\label{subsec:funcspaces}
%%%%%%%%%%%%%%%%%%%%%%%%%%%%%%

For any $s>0$ and for $i \in \{0,\ldots,M\}$, we introduce the standard vector Sobolev spaces $\bsH^s(\Omega_i)$ (resp.~$\bsH^s(\Gamma_i)$) with $\bsL^2(\Omega_i):= \bsH^0(\Omega_i)$ (resp.~$\bsL^2(\Gamma_i) := \bsH^0(\Gamma_i)$) \cite{buffa2003galerkin}. For the unbounded exterior domain $\Omega_0$, we use the $\text{loc}$-subscript to denote spaces with bounded Sobolev norm over each compact subset $K \Subset \Omega_0$. For vectors, $\cdot$ and $|\cdot|$ will denote the Euclidean product and norm, respectively. We also make use of the subscript $\operatorname{pw}$ to signal piecewise spaces. For the $\bcurl$ being the curl operator, we recall the separable Hilbert space:
\be  \bsH(\bcurl ,\Omega_i) :=  \left\{\bU \in \bsL^2 (\Omega_i)  ~:~ \bcurl \bU \in \bsL^2(\Omega_i)\right\}
\ee
for $\Omega_i$ bounded and with its corresponding adaptation for $\Omega_0$. For any smooth vector field $\bU$ defined over $\Omega_i$, we introduce electric and magnetic traces \cite{buffa2003galerkin}:
\be
\bgamma_{D,i} \bU : = \bU|_{\Gamma_i} \times \bn_i \quad \text{and} \quad \bgamma_{N,i}\bU :=% \frac{1}{\imath k_i} \bgamma_{D,i}(\bcurl \bU \pe{|_{\Gamma_i}}).
 \frac{1}{\imath k_i} (\bcurl \bU |_{\Gamma_i} \times \bn_i)=\frac{1}{\imath k_i}\bgamma_{D,i} \bcurl \bU.
\ee
Furthermore, let the $c$-superscript refer to traces taken with respect to $\Omega_i^c$ for $i \in \{0,\ldots,M\}$. Jumps and averages of traces for $\cdot  \in \{D,N\}$ are set as:
\be 
\left[\bgamma_{\cdot,i}\right]_{\Gamma_i} := \bgamma_{\cdot,i}^c - \bgamma_{\cdot,i} \quad \text{and} \quad \left\{ \bgamma_{\cdot,i}  \right\}_{\Gamma_i} := \frac{1}{2}\left(\bgamma^c_{\cdot,i} + \bgamma_{\cdot,i}\right) .
\ee
For $i \in \{0,\ldots,M\}$, and  $\divg_{\Gamma_i}$ being the tangential divergence operator, we define the spaces \cite{buffa2003galerkin}
\be\label{eq:XGammaspace}
 \boldsymbol{X}(\Gamma_i) : = \bsH_\times^{-1/2}(\divg_{\Gamma_i}, \Gamma_i)
\ee 
of tangential traces on the boundary. The mapping $\bgamma_{D,i} : \bsH(\bcurl, \Omega_i) \to \boldsymbol{X}(\Gamma_i) $ is continuous and surjective. Moreover, for $\bu, \bv \in \boldsymbol{X}(\Gamma_i)$ the self-duality for tangential traces reads as \cite[Eq.~10]{buffa2003galerkin}:
\be \label{eq:self_duality_identity}
\langle\bu, \bv  \rangle_{\times,i} \equiv  \langle \mI^i\bu, \bv  \rangle_{\times,i}:= \int_{\Gamma_i} \bu \cdot (\bn_i \times \bv )\d \Gamma_i.
\ee
with the identity operator $\mI^i: \boldsymbol{X}(\Gamma_i) \to \boldsymbol{X}(\Gamma_i)$.

For  $i \in \{0 , \ldots,M\}$, we define the \emph{Cauchy trace space} of electric and magnetic surface currents 
$$
\IH (\Gamma_i ) :=  \boldsymbol{X}(\Gamma_i) \times \boldsymbol{X}(\Gamma_i),
$$
with $\boldsymbol{X}(\Gamma_i)$ as in \eqref{eq:XGammaspace}. Finally, let us set the \emph{multiple traces space}:
\be 
\quad \IH(\Sigma) : = \IH(\Gamma_0) \times  \ldots \times \IH(\Gamma_M),
\ee 
which is self-dual with respect to the pairing:
\be 
 \left\langle    \bigoplus_{i=0}^M  \begin{pmatrix} \bu_i \\ \bp_i \end{pmatrix} ,   \bigoplus_{i=0}^M \begin{pmatrix}\bv_i \\ \bq_i \end{pmatrix}  \right\rangle_{\times} : =  \sum_{i=0}^M \Big(\langle \bu_i,\bq_i\rangle_{\times,i} + \langle \bp_i , \bv_i\rangle_{\times,i} 
 \Big).
\ee
%%%%%%%%%%%%%%%%%%%%%%%%%%%%%%
\subsection{Heterogenous EM wave transmission problem}
\label{subsec:TransmissionProblem}
%%%%%%%%%%%%%%%%%%%%%%%%%%%%%%
We now introduce the Maxwell wave transmission problem over composite scatterers. To guarantee well-posedness in the case of bounded obstacles, we require the Silver-M\"uller radiation condition \cite[Eq.~3]{buffa2003boundary} denoted by SM$(\cdot)$. 
\begin{problem}
\label{prob:transmission}
Consider a penetrable scatterer $\Omega_S\subset\mathbb{R}^3$ with boundary $\Gamma$ composed of $M\geq1$ homogeneous bounded subdomains $\Omega_i$, $i=1,\ldots,M$, and exterior $\Omega_0$. Assume an incident field $\bE^\textup{inc}$ such that $\bcurl \bcurl \bE^\textup{inc} - k_0^2 \bE^\textup{inc} =  0$ in $\Omega_0$. We seek the total field $\bE
\in\bsH_\textup{loc}(\bcurl, \mathbb{R}^3\setminus\Gamma)$, with scattered field $\bE^\textup{sc}:=\bE-\bE^\textup{inc} \in \bsH_\textup{loc}(\Omega_0)$
such that 
\begin{subequations}
\begin{align}
\label{eq:curlcurl}\bcurl \bcurl \bE - k_i^2 \bE &=  0  &\textup{in} \quad  \Omega_i,~i =0,\ldots,M,\\
  \bgamma_{D,i}  \bE &= -  \bgamma_{D,j} \bE   & \textup{on} \quad \Gamma_{ij},~i\in \{0,\ldots,M\} , ~j \in \Lambda_i ,\\
 \varrho_i \bgamma_{N,i} \bE   &= -\varrho_j \bgamma_{N,j} \bE  &  \textup{on} \quad \Gamma_{ij},~i \in \{0,\ldots,M\} ,~j \in \Lambda_i,\\[4pt]
 \textup{SM}(\bE^\textup{sc})& & \textup{for}\quad |\bx| \to \infty.
 \end{align} 
 \end{subequations}
\iffalse
$$
\begin{array}{rll}
\bcurl \bcurl \bE^i - k_i^2 \bE^i &=  0  &\textup{in} \quad  \Omega_i,~i =0,\ldots,M,\\
 \epsilon_i^{-1/2} \bgamma_{D,i}  \bE^i &= -\epsilon_j^{-1/2} \bgamma_{D,j} \bE^j   & \textup{on} \quad \Gamma_{ij},~i\in \{0,\ldots,M\} , ~j \in \Lambda_i ,\\
 \upmu^{-1/2}_i \bgamma_{N,i} \bE^i   &= -\upmu^{-1/2}_{j} \bgamma_{N,j} \bE^j  &  \textup{on} \quad \Gamma_{ij},~i \in \{0,\ldots,M\} ,~j \in \Lambda_i\\[4pt]
 \textup{SM}(\bE^\textup{sc}) & \textup{for}\quad |\bx| \to \infty.
 \end{array}
 $$
 \fi
\end{problem}
In what follows, we will define $\bE^i:=\bE|_{\Omega_i}$ for $i \in \{0, \ldots, M\}$.% and $\bE^0:=\bE^\text{sc}$. %Similarly, we set $\bE^{\text{sc},i}:=\bE|_{\Omega_i}$ for $i \in \{ 1, \ldots, M\} $, while for $i=0$ we define $\bE^{\text{sc},0} = \bE^\text{inc} + \bE|_{\Omega_0}$.%and \pe{$\bE^{i,\textup{sc}} := \bE^{\textup{sc}}|_{\Omega_i}$}.
\begin{remark}
Neumann (magnetic) transmission conditions on $\Gamma_{ij}$ are:
\begin{align*} 
\frac{1}{\upmu_i} (\bcurl \bE|_{\Gamma_i} \times \bn_i)   &= -\frac{1}{\upmu_j
} (\bcurl \bE|_{\Gamma_j}\times \bn_j)  \\
\frac{k_i}{\upmu_i} \bgamma_{N,i}\bE   &= -\frac{k_j}{\upmu_j
} \bgamma_{N,j} \bE.
\end{align*}
However, 
$$
\frac{k_i}{\upmu_i} = \omega \frac{\sqrt{\epsilon_i \upmu_i}}{\upmu_i} = \omega \sqrt{\frac{\epsilon_i}{\upmu_i}} = \omega \varrho_i,
$$
yielding
\be 
\varrho_i\gamma_{N,i} \bE= - \varrho_j\gamma_{N,j} \bE.
\ee
\end{remark}

%%%%%%%%%%%%%%%%%%%%%%%%%%%%%%
\subsection{Boundary integral potentials and operators}
\label{subsec:IntegralOperators}
%%%%%%%%%%%%%%%%%%%%%%%%%%%%%%
Since we are seeking the EM fields in an unbounded domain, we will reduce the dimensionality of the problem by recasting the original volume boundary value Problem \ref{prob:transmission} into a system of BIEs. To that end, we introduce the corresponding EM boundary potentials and integral operators (BIOs). 

For $i\in \{0,\ldots,M\}$ and any $k>0$ the electric and magnetic potentials are defined as:
\be \label{eq:potential_operators}
\begin{split}
\mE_{k}^i \bv (\bx )& : = \imath k \int_{\Gamma_i} G_{k} (\bx, \by)\bv ( \by)  \d \Gamma_i (\by)  - \frac{1}{\imath k} \nabla_\bx \int_{\Gamma_i}   G_{k}(\bx,\by)\divg_\by \cdot \bv(\by) \d \Gamma_i (\by), \\
\mH_{k}^i \bv (\bx ) & : = \bcurl \int_{\Gamma_i} G_{k} (\bx, \by)\bv ( \by)  \d \Gamma_i (\by) ,
\end{split}
\ee 
wherein $G_k(\bx, \by) = \dfrac{\exp ( \imath k |\bx - \by| )}{4 \pi |\bx - \by| } $ is the standard scalar Helmholtz kernel and $\divg$ is the divergence operator. By the Stratton-Chu formulae \cite{kirsch2015mathematical}, any interior electric field $\bU^{i}\in\bsH_\textup{loc}(\bcurl,\Omega_i)$ satisfying the homogeneous Maxwell equation \eqref{eq:curlcurl} can be represented as \cite[Eq.~14]{KLEANTHOUS2022111099}
\begin{alignat}{3}
\label{eqn:StrattonChu_int}
\mathcal{H}^i_{k_i} (\gamma_{D,i} \bU^{i}) &+ \mathcal{E}^i_{k_i} (\gamma_{N,i} \bU^{i}) = 
\begin{cases}
\bU^{i}(\mathbf{x})  , & \mathbf{x} \in \Omega_i, \\
\mathbf{0}, & \mathbf{x} \in \Omega^c_i
\end{cases}.
\end{alignat}
Next, we introduce the electric and magnetic field boundary integral operators $\boldsymbol{X}(\Gamma_i) \to \boldsymbol{X}(\Gamma_i)$:
\be \label{eq:boundary_operators}
\begin{cases}
\text{EFIO:} & \mT_{k}^i  : = \left\{ \bgamma_{D,i} \mE^i_{k} \right\}_{\Gamma_i}  = - \left\{ \bgamma_{N,i} \mH^i_{k}\right\}_{\Gamma_i} , \\
\text{MFIO:} & \mK_{k}^i : = \left\{ \bgamma_{D,i} \mH^i_{k} \right\}_{\Gamma_i} = \left\{ \bgamma_{N,i} \mE_{k}^i\right\}_{\Gamma_i}. \\
\end{cases}
\ee 

For $i \in \{0,\ldots,M\}$, we also define the multiple traces identity operator along with the scaled multiple traces one
\be \label{eq:BOmultitraceI}
\bmI_i : = \begin{bmatrix} \mI^i &0 \\ 0 & \mI^i  \end{bmatrix}\quad \text{and} \quad  \hbmA^i_k \equiv \hbmA_i: = \begin{bmatrix}  \mK^i_k  & \varrho_i^{-1} \mT^i_k   \\ - \varrho_i \mT^i_k  & \mK^i_k  \end{bmatrix},
\ee 
mapping from $\IH (\Gamma_i )$ into itself and with $\mI^i$ as in \eqref{eq:self_duality_identity}. 

For $i \in \{0,\ldots,M\}$, we introduce the \emph{Cauchy data} for the solution and incident field in \Cref{prob:transmission} as:
\begin{align}
    \label{eqn:boundary_conditions_traces} \mathbf{u}_i :=  \begin{bmatrix}
\gamma_{D,i} \bE\\[6pt]
\varrho_i \gamma_{N,i} \bE
\end{bmatrix} \quad \text{and} \quad  \mathbf{u}^\text{inc}_0 :=  \begin{bmatrix}
\gamma_{D,0} \bE^{\text{inc}}\\[6pt]
\gamma_{N,0} \bE^{\text{inc}}
\end{bmatrix},
\end{align}
yielding $\bu , \bu^\text{inc} \in \IH(\Sigma)$ with
\be \label{eq:bu_buinc}
\bu = \begin{bmatrix} \bu_0 \\  \bu_1 \\ \vdots\\ \bu_M \end{bmatrix} \quad \text{and} \quad \bu^\text{inc} = \begin{bmatrix}   \bu^\text{inc}_0 \\ 0 \\ \vdots \\ 0 \end{bmatrix}.
\ee 
Setting $\bu^\text{sc} := \bu - \bu^\text{inc}$, the (interior) Calderón projector property \cite[Eq.~50]{buffa2003boundary} for each domain $i \in \{ 0,\ldots,M\}$ is expressed as:
\be \label{eq:interiorCalderon}
   \left(\frac{1}{2}\bmI_i  - \hbmA_i  \right) \bu^\text{sc} = 0.
\ee 
Notice that the formula remains valid for $i = 0$, as $\Omega_0$ is defined with the normal vector $\bn_0$ pointing towards the exterior of the domain. Next, \eqref{eq:interiorCalderon} can be rewritten as:
\be \label{eq:cald_temp}
 2 \hbmA_i   \bu^\text{sc}  - \bu^\text{sc} = 0.
\ee
In $\Omega_0$, the interior Calderón identity for the scattered field is:
\be \label{eq:cald1}
\left(\frac{\bmI_0}{2} -  \hbmA_0 \right) \bu_0^\text{sc} = \bu_0^\text{sc}.
\ee
Next, we succinctly present the Calderón identities  in \Cref{prop:calderon_identities}.
\begin{proposition}[Calderón identities]\label{prop:calderon_identities}
Let $\bu$ and $\bu^\textup{inc}$ be defined as in \eqref{eqn:boundary_conditions_traces}. For $i \in \{ 0,\ldots,M\}$, there holds that
\be\label{eq:relCalderon} 
2  \hbmA_i  \bu_i - \bu_i = 2 \bu^\textup{inc}_i. %\quad\text{\todo{what is $ \bu^\textup{inc}_i$? R: Defined in Eq (18).}}
\ee 
\end{proposition}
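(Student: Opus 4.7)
The plan is to deduce \eqref{eq:relCalderon} from the scattered-field Calderón identity \eqref{eq:cald_temp} by decomposing the total Cauchy trace $\bu_i$ into scattered and incident contributions, exploiting linearity of $\hbmA_i$, and invoking a complementary Calderón identity for the incident trace on $\Gamma_0$.

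For the bounded subdomains $i \in \{1,\dots,M\}$, the definition in \eqref{eq:bu_buinc} gives $\bu^{\text{inc}}_i = 0$ and the total trace coincides with the "scattered" trace used in \eqref{eq:cald_temp}. Hence \eqref{eq:cald_temp} applied at index $i$ immediately yields $2\hbmA_i \bu_i - \bu_i = 0 = 2\bu^{\text{inc}}_i$, which is \eqref{eq:relCalderon} in that case.

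For the exterior subdomain $i=0$, I write $\bu_0 = \bu_0^{\text{sc}} + \bu_0^{\text{inc}}$ with $\bu_0^{\text{sc}} := \bu_0 - \bu_0^{\text{inc}}$ and expand by linearity:
\[
2 \hbmA_0 \bu_0 - \bu_0 \;=\; \bigl( 2\hbmA_0 \bu_0^{\text{sc}} - \bu_0^{\text{sc}} \bigr) + \bigl( 2\hbmA_0 \bu_0^{\text{inc}} - \bu_0^{\text{inc}} \bigr).
\]
The first bracket vanishes by \eqref{eq:cald_temp} at $i=0$. The proof therefore reduces to showing the complementary identity $(2\hbmA_0 - \bmI_0)\bu_0^{\text{inc}} = 2\bu_0^{\text{inc}}$. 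For this I use that $\bE^{\text{inc}}$ solves $\bcurl\bcurl\bE^{\text{inc}} - k_0^2 \bE^{\text{inc}} = 0$ in all of $\mathbb{R}^3$, so in particular in the bounded region $\Omega_0^c$. Applying the Stratton-Chu representation \eqref{eqn:StrattonChu_int} to $\bE^{\text{inc}}$ from the $\Omega_0^c$ side of $\Gamma_0$, and then averaging Dirichlet/Neumann traces via the jump relations of the EFIO and MFIO operators defined in \eqref{eq:boundary_operators}, delivers the required relation.

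The main obstacle is sign and orientation bookkeeping in this last step. Since $\Omega_0$ is unbounded and $\bn_0$ is its exterior normal, $\bn_0$ points from $\Omega_0$ into $\Omega_0^c$, i.e.\ it is the \emph{inward} normal of the bounded region $\Omega_0^c$ where $\bE^{\text{inc}}$ is being represented. This orientation flip---together with the fact that $\varrho_0 = 1$ is built into the second component of $\bu_0^{\text{inc}}$ in \eqref{eqn:boundary_conditions_traces}---is exactly what converts the complementary projector identity for $\bu_0^{\text{inc}}$ into $+2\bu_0^{\text{inc}}$ on the right-hand side of \eqref{eq:relCalderon}, rather than $0$ or $-2\bu_0^{\text{inc}}$. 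Once this orientation is consistently tracked, the rest of the proof is a one-line algebraic combination of the two Calderón identities.
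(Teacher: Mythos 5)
Your treatment of the bounded subdomains $i\in\{1,\dots,M\}$ is exactly the paper's, and your overall strategy for $i=0$ (split $\bu_0=\bu_0^{\text{sc}}+\bu_0^{\text{inc}}$, apply a Calderón identity to each piece, add) is also the paper's. The gap lies in which identity you apply to the scattered part. By invoking the homogeneous identity \eqref{eq:cald_temp} at $i=0$, i.e.\ $2\hbmA_0\bu_0^{\text{sc}}-\bu_0^{\text{sc}}=0$, you reduce the claim to $(2\hbmA_0-\bmI_0)\bu_0^{\text{inc}}=2\bu_0^{\text{inc}}$, which is $\hbmA_0\bu_0^{\text{inc}}=\tfrac{3}{2}\bu_0^{\text{inc}}$. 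No Stratton--Chu/jump-relation argument can deliver this: since $(2\hbmA_0)^2=\bmI_0$, the only identities of this type are $\hbmA_0 v=\pm\tfrac12 v$. The identity the incident field actually satisfies (it solves Maxwell's equations in the bounded region $\Omega_0^c$) is the exterior Calderón identity \eqref{eq:cald2}, which yields $(2\hbmA_0-\bmI_0)\bu_0^{\text{inc}}=0$; combined with your vanishing first bracket this would give $2\hbmA_0\bu_0-\bu_0=0$ and the source term would disappear entirely. The obstruction is a factor of $3$, not a sign, so it cannot be absorbed into the ``orientation bookkeeping'' you defer to in your last paragraph.

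The paper closes the argument by using, for the scattered field on $\Gamma_0$, the inhomogeneous form \eqref{eq:cald1}, $\bigl(\tfrac12\bmI_0-\hbmA_0\bigr)\bu_0^{\text{sc}}=\bu_0^{\text{sc}}$ --- note the nonzero right-hand side, in contrast to \eqref{eq:cald_temp} --- and adding it to \eqref{eq:cald2}; the inhomogeneous term $2\bu_0^{\text{inc}}$ in \eqref{eq:relCalderon} originates precisely from that right-hand side after eliminating $\bu_0^{\text{sc}}$ in favour of $\bu_0$ and $\bu_0^{\text{inc}}$. To repair your proof, replace \eqref{eq:cald_temp} by \eqref{eq:cald1} in the first bracket and keep \eqref{eq:cald2} for the second; the remaining algebra then proceeds as in \ref{app:appendixA}.
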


\begin{proof}Refer to \ref{app:appendixA}.
\end{proof}

Finally, for $i \in \{0,\ldots,M\}$, we introduce the transmission operators across interfaces: 
\be \label{eq:BOTransmission}
\renewcommand\arraystretch{1.3}  % default value: 1.0
 \bmI_{ij} : = \begin{bmatrix} \tmI^{ij} &0 \\ 0 & \tmI^{ij}  \end{bmatrix}\quad  \text{for} \quad  j\in \Lambda_i  \quad\text{and}\quad \bmI_{ij} : = \begin{bmatrix} 0 &0 \\ 0 & 0 \end{bmatrix} \quad \text{else,}
\ee 
where $\tmI^{ij}: \boldsymbol{X}_\text{pw}(\Gamma_i) \to \boldsymbol{X}_\text{pw}(\Gamma_j)$ is the restriction-and-extension-by-zero operator defined as:
\be \label{eq:tildeX}
\tmI^{ij}:=
\begin{cases}
  1  \quad \text{for} \quad \bx  \in \Gamma_{ij},\\
0 \quad \text{elsewhere on } \Gamma_{ij}.
\end{cases}
\ee 
We use the tilde superscript to distinguish the transmission operator $\tmI^{ij}: \boldsymbol{X}_\text{pw}(\Gamma_i) \to \boldsymbol{X}_\text{pw}(\Gamma_j)$ from the identity operator over $\Gamma_{ij}$ being  $\mI^{ij}: \boldsymbol{X}(\Gamma_{ij}) \to \boldsymbol{X}(\Gamma_{ij})$.

%%%%%%%%%%%%%%%%%%%%%%%%%%%%%%%%%%%%%%%%%%%%%%%%%%%%%%%%%%%%
\section{Local multiple traces formulation}
\label{sec:BIEs}
%%%%%%%%%%%%%%%%%%%%%%%%%%%%%%%%%%%%%%%%%%%%%%%%%%%%%%%%%%%%
In what follows, we use the previous tools to state the continuous and discrete versions of the Maxwell local MTF. As mentioned, the idea is to set electric and magnetic boundary current pairs (Cauchy data) as independent unknowns per subdomain, each current pair satisfying its corresponding Calder\'on identity and coupled to others via weak transmission conditions.
%%%%%%%%%%%%%%%%%%%%%%%%%%%%%%%%%%%%%%%%%%%%%%%%%%%%%%%%%%%%
\subsection{Continuous setting}
%%%%%%%%%%%%%%%%%%%%%%%%%%%%%%%%%%%%%%%%%%%%%%%%%%%%%%%%%%%%
The local MTF is obtained by appropriately combining the Calderón identities in \Cref{prop:calderon_identities} and the transmission conditions. 
Notice that the transmission conditions for  $i \in \{0,\ldots,M\}$ can be rewritten as
\be\label{eq:transmission_for_mtf}
\bu_i = \sum_{j \in \Lambda_i} \bu_i|_{\Gamma_{ij}} =  -\sum_{j \in \Lambda_i}  \bmI_{ji} \bu_j.
\ee 
yielding the local MTF by combining \eqref{eq:transmission_for_mtf} and \eqref{eq:relCalderon} 
\be 
2  \hbmA_i  \bu_i + \sum_{j \in \Lambda_i} \bmI_{ji} \bu_j = 2 \bu^\textup{inc}_i.
\ee
Observe that now the factors $\rho_i$ show up in the operators $\hbmA_i$. This differs from the original Helmholtz formulation provided \cite{hiptmairjerez2012multiplehelmholtz}.

\begin{problem}[Maxwell MTF problem]\label{prob:MTF}
For $\bu^\textup{inc} \in \IH(\Sigma)$, seek $\bu \in \IH(\Sigma)$ such that the variational form:
\be 
\langle \bfcal{M} \bu, \bv \rangle_\times = \langle \bu^\textup{inc} , \bv \rangle_\times , \quad \forall \bv\ \in \widetilde{\IH}(\Sigma),
\ee 
is satisfied with
\be 
\bfcal{M} := 2 \bmA + \bmX =  \begin{bmatrix}
    2\hbmA_0 & \bmI_{10} & \bmI_{20} & \ldots & \bmI_{M0} \\[3pt]
    \bmI_{01} & 2\hbmA_1 & \bmI_{21}  &  \ldots & \bmI_{M1}\\[3pt]
    \vdots & \vdots & \vdots & \ddots & \vdots \\[3pt]
    \bmI_{0M} & \bmI_{1M} & \bmI_{2M} & \ldots & 2 \hbmA_{M} 
    \end{bmatrix}.
\ee 
wherein $\bmA : = \textup{diag}(\hbmA_i)$ and $\bmX : = (\bmI_{ji})$ for $i,j \in \{0,\ldots,M\}$.
\end{problem}

\begin{remark}
    Problem \ref{prob:MTF} can be shown to be well posed and equivalent to \Cref{prob:transmission} in the continuous case.
\end{remark}

\begin{remark}\label{rem:X}
    For clarity, note that Helmholtz and Laplace MTF problems (refer for example to \cite[Problem 6]{hiptmairjerez2012multiplehelmholtz}) are typically presented in the form $2 \bmA -\bmX$, where $\bmX:= (\bmX_{ji})$ for $i,j\in \{0, \ldots,M\}$, with 
    \be 
    \bmX_{ij} : = \begin{bmatrix} \tmI^{ij} &0 \\ 0 & -\tmI^{ij}  \end{bmatrix}.
    \ee
    In the Maxwell case, both the electric and magnetic traces involve the normal vector, resulting in a minus sign for both traces. In contrast, for Laplace and Helmholtz problems, the minus sign appears only for the Neumann trace.
\end{remark}
\begin{remark}[PMCHWT]
The PMCHWT (or STF) for $M=1$ can be derived from the local MTF formulation by summing the first and second rows in \Cref{prob:MTF}:
\be
2\hbmA_0 \bu_0  +  \bu_1 + \bu_0  + 2 \hbmA_1 \bu_1 =2 \bu_0^\textup{inc}
\ee 
yielding for the single trace unknown $\bu$, and using relation $\bu_0=-\bu_1$:
\be 
 (\hbmA_0 -  \hbmA_1)\bu_0  = \bu^\textup{inc}.
\ee
\end{remark}
%\todo{Add the exact final formulation for the MTF. Also check the notation for the MTF operator. Choose the final notation for the blocks $\bmI_{ij}$ and $\bmX$. Add continuous and discrete well-posedness considerations as an Assumption? $\Gamma$-notation or $\Sigma$?}
%%%%%%%%%%%%%%%%%%%%%%%%%%%%%%%%%%%%%%%%%%%%
\subsection{Discretization setting}
%%%%%%%%%%%%%%%%%%%%%%%%%%%%%%%%%%%%%%%%%%%%
Recall the skeleton definition \eqref{eq:skeleton} in the sense of a union of disjoint oriented surfaces
$$
 \Sigma = \bigcup_{i \in \{0,\ldots,M\} ,j\in \Lambda_{i}} \Gamma_{ij} .
$$
with each $\Gamma_{ij}$ oriented with $\bn_{ij}$. Under this setting, each $\Gamma_{ij}$ is discretized using triangular (possibly curved) elements yielding a conforming mesh $\Gamma_{ij}^h$, with meshsize $h>0$. We assume that the normal vector for each interface is from $\min(i,j)$ to $\max(i,j)$. The union of these meshes issues the mesh for $\Sigma^h$ of the skeleton. 
\begin{remark}[Geometries and subdomains]\label{rmk:gmsh}
Meshes are typically generated from a specified geometry parameterization (for example, a $\texttt{.geo}$ in GMSH \cite{geuzaine2009gmsh}). Specifically focusing on GMSH, which is praised for its simplicity and versatility, one can utilize the $\texttt{Physical Surface}$ feature to correctly identify each $\Gamma_{ij}$ and subsequently generate a conforming mesh for the skeleton. Note that the use of a minus sign enables the selection of the normal vector orientation definition when meshing $\texttt{Physical Surface}$.
\end{remark}
\begin{remark}[Conventions for meshing $\Gamma_{ij}$]
The procedure assumes that each $\Gamma_{ij}^h$ is meshed with rule $i<j$ for simplicity and to establish consistent a convention. However, the approach is flexible with respect to interface indexing. For example, in the case of triple a point (middle column in \Cref{tab:DomainConfiguration}), we may mesh $\Gamma_{10}^h$, $\Gamma_{12}^h$ and $\Gamma_{20}^h$. Therefore, we have $\Gamma_0 = \texttt{[-10,-20]}$, $\Gamma_1 = \texttt{[10,12]}$ and $\Gamma_2 = \texttt{[20,-12]}$. This configuration aligns more naturally with classic formulations, as it is resembles the meshing approach for scatterers $\Omega_1$ and $\Omega_2$, where boundaries involving $\Omega_0$ are taken from the scatterer towards the exterior domain.
\end{remark}

We introduce the space RWG$_i$ of cardinality $N_i$ to be the span of the Rao-Wilton-Glisson (RWG) basis functions \cite{rao1982electromagnetic} given by the direct sum:
\be \label{eq:rwg_multitrace}
\IH_h: =\bigoplus_{i=0}^M \left(\begin{array}{c}{\rm RWG}_i \\ {\rm RWG}_i \end{array}\right) \subset \IH(\Sigma), \quad \text{with}\quad N:=\dim(\IH_h)= 2\sum_{i=0}^M N_i.
\ee 
Similarly, we can define the space RWG$_{ij}$ over interfaces as the restriction of RWG$_i$ to the interface $\Gamma_{ij}$, where the normal vector $\bn_{ij}$ points from $\Omega_i$ to $\Omega_j$. Notably, multiple-trace spaces involve a doubling of unknowns on each interface, as they include a pair of Cauchy traces taken from both $\Omega_i$ and $\Omega_j$. In contrast, single-trace spaces typically contain only one pair of Cauchy traces per interface. However, this doubling of unknowns makes the local MTF operator well-suited for block preconditioning within each subdomain. 
%Though not being related to the discrete function spaces for the local MTF, we will observe that they allow to assembly the transmission operators more easily. 

Let us call the basis elements of $\IH_h$ $\{\bpsi_n\}_{n=1}^N$, then the Galerkin solution $\bu_h = \sum_{n=1}^N u_n \bpsi_n \in \IH_h$ satisfies the following variational problem:
\be 
\langle \bmM \bu_h, \bv_h \rangle_\times = \Big{\langle} \bu_0^\text{inc} , \bv_h \Big{\rangle}_\times , ~ \forall \ \bv_h \in \IH_h.
\ee 
One can construct the Galerkin MTF matrix $\bM:=( \bM_{mn})$ for $m,n\in \{ 1, 
\ldots ,N\}$ via the following duality products
\be
\bM_{mn} = \langle \bmM \bpsi_n, \bpsi_m \rangle_\times, \quad m,n= 1, \ldots, N.
\ee

%%%%%%%%%%%%%%%%%%%%%%%%%%%%%%%%%%%%%%%%%%%%%%%%%%%%%%%%%%%%%%%%%%%%%%
\section{Practical implementation of the MTF}\label{sec:Implementation}
%%%%%%%%%%%%%%%%%%%%%%%%%%%%%%%%%%%%%%%%%%%%%%%%%%%%%%%%%%%%%%%%%%%%%%
We now discuss the practical aspects that facilitate the computation of the local MTF operator. It is worth mentioning that these apply equally to both EM and acoustic (including Laplace) scattering problems.
\subsection{Motivation}
As mentioned in \Cref{sec:intro}, the local MTF involves:
\begin{enumerate}
    \item[(i)] Calderón operators $\hbmA_i$ \eqref{eq:BOmultitraceI};
    \item[(ii)] Transmission operators $\bmI_{ij}$ \eqref{eq:BOTransmission}.
\end{enumerate}
Interestingly, the matrix coefficients for these operators are generally easy to access through BEM solvers. In fact, for $i\in \{0,\ldots,M\}$, one can mesh $\Gamma_i$ and assembly the matrix for $\hbmA_i$. In the same fashion, the coefficients for the discretization of $\bmI_{ij}$ can be derived using suitable dof-mappings from the data in $\bmI_{i}$. However, this approach is insufficient for handling complex settings. Transferring the degrees of freedom (dofs) for all these matrices to the multiple trace discrete spaces and mapping the extension-to-restriction operator is far from straightforward. To the authors' knowledge, most local MTF BEM routines manually perform these dof mappings, leading to non-automated schemes, typically limited to a few objects. For scalable and automatic routines, the information between subdomains must be transferred to a common data receptor. This is achieved using a top-down approach centered on the skeleton mesh, which encompasses all the subdomains and acts as the \emph{core information dictionary} for the space.

\subsection{Top-down approach for space definitions}
Consider the skeleton mesh $\Sigma^h$ in \Cref{sec:BIEs}. To properly define the discretization spaces in \eqref{eq:rwg_multitrace}, it is necessary to assemble operators on the triangles associated with the given subdomains and to set the corresponding the normal vector. For each element $\tau \in \Sigma_h$, we attach a normal orientation multiplier $\texttt{swap} \in \{-1,1\}$ and its degrees of freedom $\texttt{dofs}$ (e.g.~the dofs associated to each edge for RWG function space). The normal multipliers allows to swap the normal vector orientation in assembly routines.

Note that with the current setting,
$$
\Gamma_i = \bigcup_{j \in \Lambda_i} \Gamma_{ij}.
$$
\begin{remark}
    Interestingly, the $\Lambda_i$ set can be represented as a connectivity graph. Each subdomain $\Gamma_i$ for $i\in \{0,\ldots,M\}$ is defined as a node, and an edge is added between nodes $i$ and $j$ if $j \in \Lambda_i$.
\end{remark}
Since the the normal vector is defined from $\min(i,j)$ to $\max(i,j)$, no action is needed when $\min(i,j) = i$ (i.e., $\texttt{swap}=1$), while the normal is swapped when $\min(i,j)=j$ (i.e., $\texttt{swap} = -1$). This is detailed in \Cref{alg:process_segment} and ensures the correct definition of RWG$_i$ spaces. Additionally, the normal swapping vector NM$_i$ can be easily integrated into the core assembly routines. 

\begin{algorithm} \caption{Process subdomains and swap normals}
\begin{algorithmic}[1]
\label{alg:process_segment}
\Require Skeleton mesh: $\Sigma_h$; index: $i \in 0 , \ldots , M$. 
%\State RWG$_i$ \gets  \{\}; NM$_i$  \gets  \{\}
\Repeat
\ForEach {$ j \in \Lambda_i $}
\State $\texttt{swap} \gets    \sign ( j - i) $
\State $\dofs \gets$   \{dofs associated to triangles $\tau \in \Gamma^h_{ij}$ \} 
\State RWG$_i \gets$  RWG$_i \cup  \dofs $ 
\State NM$_i \gets$  NM$_i~\cup $ \{$\texttt{swap}$\} 
%\EndFor
\Until {}
\Ensure RWG$_i$, NM$_i$
\end{algorithmic}
\end{algorithm}
With \Cref{alg:process_segment} at hand, the Calderón operator can be properly defined. With regards to transmission operators, the last step is to handle the restriction-to-extension process \eqref{eq:tildeX}. Set $i\in\{0,\ldots,M\}$ and $j\in\Lambda_i$. For any $\bphi_n \in \text{RWG}_i$, $n \in \{1 , \ldots, N_i\}$ and $\bpsi_m \in \text{RWG}_j$, $m \in \{1, \ldots, N_j\}$,  the matrix for the transmission operator in \eqref{eq:BOTransmission} reads as
\be\label{eq:trick_transmission}
\begin{split}
%\widetilde{\bI}_{m,n}& =   \langle  \tmI^{ij} \bpsi_n, \bphi_m \rangle_\times, \quad m \in 1, \ldots, N_i,\quad n \in 1 , \ldots, N_j\\
\widetilde{\bI}^{ij}_{mn}& =   \langle  \tmI^{ij} \bphi_n,  \bpsi_m  \rangle_\times =  \sum_{\tau \in \Gamma^h_{ij}}  \bphi_n  \cdot (\bn_{ij} \times \bpsi_m ) =  \langle  \mI^{i} \bphi_n, \bpsi_m\rangle_\times, %\quad n \in 1, \ldots, N_i ,\quad m \in 1, \ldots , N_j.
\end{split}
\ee
for all $n=1, \ldots, N_i$, $m=1, \ldots , N_j$.
%\pe{as the support for the integrals is $\Gamma_{ij}= \Gamma_i \cup \Gamma_j$.} 
This formula shows that the transmission operator can be obtained without modifying traditional BEM solvers, by assembling the classical $\texttt{identity}$ operator over RWG$_i$ as the domain and RWG$_j$ as the dual-to-range. For clarity, we examine several relevant cases in detail in \Cref{tab:DomainConfiguration}. 

\begin{table}[htb!]
\renewcommand\arraystretch{1.4}
\begin{center}
\footnotesize
%\resizebox{9cm}{!} {
\begin{tabular}{
    |>{\centering\arraybackslash}m{2cm}% instead of "p" is "m"
    |>{\centering\arraybackslash}m{4cm}
    |>{\centering\arraybackslash}m{4cm}
    |>{\centering\arraybackslash}m{4cm}|}
%\vspace{0.1cm}
\hline
\bf Case &  
\bf Dielectric problem & \bf Triple point  & \bf Grid problem \\ \hhline{|=|=|=|=|}
&&&\\
Simplified representation &  \tikzset{every picture/.style={line width=0.75pt}} %set default line width to 0.75pt        

\begin{tikzpicture}[x=0.75pt,y=0.75pt,yscale=-1,xscale=1]
%uncomment if require: \path (0,300); %set diagram left start at 0, and has height of 300

%Shape: Rectangle [id:dp07232851235711935] 
\draw   (379,67.5) -- (430,67.5) -- (430,116.5) -- (379,116.5) -- cycle ;
%Shape: Rectangle [id:dp8748469023003549] 
\draw   (441,67.5) -- (492,67.5) -- (492,116.5) -- (441,116.5) -- cycle ;

% Text Node
\draw (384,37.17) node [anchor=north west][inner sep=0.75pt]    {$\Omega _{0}$};
% Text Node
\draw (393.54,80.44) node [anchor=north west][inner sep=0.75pt]    {$\Omega _{1}$};
% Text Node
\draw (455.6,80.44) node [anchor=north west][inner sep=0.75pt]    {$\Omega _{2}$};

\end{tikzpicture} &  \tikzset{every picture/.style={line width=0.75pt}} %set default line width to 0.75pt        

\begin{tikzpicture}[x=0.75pt,y=0.75pt,yscale=-1,xscale=1]
%uncomment if require: \path (0,300); %set diagram left start at 0, and has height of 300

%Shape: Rectangle [id:dp7145508051685514] 
\draw   (564,64.5) -- (615,64.5) -- (615,113.5) -- (564,113.5) -- cycle ;
%Shape: Rectangle [id:dp08465611751156876] 
\draw   (615,64.5) -- (666,64.5) -- (666,113.5) -- (615,113.5) -- cycle ;
%Shape: Circle [id:dp050364202419628246] 
\draw  [draw opacity=0][fill={rgb, 255:red, 248; green, 82; blue, 28 }  ,fill opacity=1 ] (611.26,113.5) .. controls (611.26,111.43) and (612.93,109.76) .. (615,109.76) .. controls (617.07,109.76) and (618.74,111.43) .. (618.74,113.5) .. controls (618.74,115.57) and (617.07,117.24) .. (615,117.24) .. controls (612.93,117.24) and (611.26,115.57) .. (611.26,113.5) -- cycle ;
%Shape: Circle [id:dp9954497755764384] 
\draw  [draw opacity=0][fill={rgb, 255:red, 248; green, 82; blue, 28 }  ,fill opacity=1 ] (611.26,64.5) .. controls (611.26,62.43) and (612.93,60.76) .. (615,60.76) .. controls (617.07,60.76) and (618.74,62.43) .. (618.74,64.5) .. controls (618.74,66.57) and (617.07,68.24) .. (615,68.24) .. controls (612.93,68.24) and (611.26,66.57) .. (611.26,64.5) -- cycle ;

% Text Node
\draw (569,34.17) node [anchor=north west][inner sep=0.75pt]    {$\Omega _{0}$};
% Text Node
\draw (578.54,77.44) node [anchor=north west][inner sep=0.75pt]    {$\Omega _{1}$};
% Text Node
\draw (629.6,77.44) node [anchor=north west][inner sep=0.75pt]    {$\Omega _{2}$};

\end{tikzpicture} & \tikzset{every picture/.style={line width=0.75pt}} %set default line width to 0.75pt        

\begin{tikzpicture}[x=0.75pt,y=0.75pt,yscale=-1,xscale=1]
%uncomment if require: \path (0,300); %set diagram left start at 0, and has height of 300

%Shape: Rectangle [id:dp6547984574738615] 
\draw   (750,63.5) -- (801,63.5) -- (801,112.5) -- (750,112.5) -- cycle ;
%Shape: Rectangle [id:dp6112047698596206] 
\draw   (801,63.5) -- (852,63.5) -- (852,112.5) -- (801,112.5) -- cycle ;
%Shape: Rectangle [id:dp5483557287861014] 
\draw   (801,112.5) -- (852,112.5) -- (852,161.5) -- (801,161.5) -- cycle ;
%Shape: Rectangle [id:dp9849534877135715] 
\draw   (750,112.5) -- (801,112.5) -- (801,161.5) -- (750,161.5) -- cycle ;
%Shape: Circle [id:dp3598071460228389] 
\draw  [draw opacity=0][fill={rgb, 255:red, 248; green, 82; blue, 28 }  ,fill opacity=1 ] (746.26,112.5) .. controls (746.26,110.43) and (747.93,108.76) .. (750,108.76) .. controls (752.07,108.76) and (753.74,110.43) .. (753.74,112.5) .. controls (753.74,114.57) and (752.07,116.24) .. (750,116.24) .. controls (747.93,116.24) and (746.26,114.57) .. (746.26,112.5) -- cycle ;
%Shape: Circle [id:dp7293393554845244] 
\draw  [draw opacity=0][fill={rgb, 255:red, 248; green, 82; blue, 28 }  ,fill opacity=1 ] (797.26,112.5) .. controls (797.26,110.43) and (798.93,108.76) .. (801,108.76) .. controls (803.07,108.76) and (804.74,110.43) .. (804.74,112.5) .. controls (804.74,114.57) and (803.07,116.24) .. (801,116.24) .. controls (798.93,116.24) and (797.26,114.57) .. (797.26,112.5) -- cycle ;
%Shape: Circle [id:dp5520421361494308] 
\draw  [draw opacity=0][fill={rgb, 255:red, 248; green, 82; blue, 28 }  ,fill opacity=1 ] (797.26,63.5) .. controls (797.26,61.43) and (798.93,59.76) .. (801,59.76) .. controls (803.07,59.76) and (804.74,61.43) .. (804.74,63.5) .. controls (804.74,65.57) and (803.07,67.24) .. (801,67.24) .. controls (798.93,67.24) and (797.26,65.57) .. (797.26,63.5) -- cycle ;
%Shape: Circle [id:dp21386236836075545] 
\draw  [draw opacity=0][fill={rgb, 255:red, 248; green, 82; blue, 28 }  ,fill opacity=1 ] (746.26,63.5) .. controls (746.26,61.43) and (747.93,59.76) .. (750,59.76) .. controls (752.07,59.76) and (753.74,61.43) .. (753.74,63.5) .. controls (753.74,65.57) and (752.07,67.24) .. (750,67.24) .. controls (747.93,67.24) and (746.26,65.57) .. (746.26,63.5) -- cycle ;
%Shape: Circle [id:dp5058335246859993] 
\draw  [draw opacity=0][fill={rgb, 255:red, 248; green, 82; blue, 28 }  ,fill opacity=1 ] (848.26,63.5) .. controls (848.26,61.43) and (849.93,59.76) .. (852,59.76) .. controls (854.07,59.76) and (855.74,61.43) .. (855.74,63.5) .. controls (855.74,65.57) and (854.07,67.24) .. (852,67.24) .. controls (849.93,67.24) and (848.26,65.57) .. (848.26,63.5) -- cycle ;
%Shape: Circle [id:dp6395001812996572] 
\draw  [draw opacity=0][fill={rgb, 255:red, 248; green, 82; blue, 28 }  ,fill opacity=1 ] (848.26,112.5) .. controls (848.26,110.43) and (849.93,108.76) .. (852,108.76) .. controls (854.07,108.76) and (855.74,110.43) .. (855.74,112.5) .. controls (855.74,114.57) and (854.07,116.24) .. (852,116.24) .. controls (849.93,116.24) and (848.26,114.57) .. (848.26,112.5) -- cycle ;
%Shape: Circle [id:dp1355953982836957] 
\draw  [draw opacity=0][fill={rgb, 255:red, 248; green, 82; blue, 28 }  ,fill opacity=1 ] (746.26,161.5) .. controls (746.26,159.43) and (747.93,157.76) .. (750,157.76) .. controls (752.07,157.76) and (753.74,159.43) .. (753.74,161.5) .. controls (753.74,163.57) and (752.07,165.24) .. (750,165.24) .. controls (747.93,165.24) and (746.26,163.57) .. (746.26,161.5) -- cycle ;
%Shape: Circle [id:dp8587222079193666] 
\draw  [draw opacity=0][fill={rgb, 255:red, 248; green, 82; blue, 28 }  ,fill opacity=1 ] (797.26,161.5) .. controls (797.26,159.43) and (798.93,157.76) .. (801,157.76) .. controls (803.07,157.76) and (804.74,159.43) .. (804.74,161.5) .. controls (804.74,163.57) and (803.07,165.24) .. (801,165.24) .. controls (798.93,165.24) and (797.26,163.57) .. (797.26,161.5) -- cycle ;
%Shape: Circle [id:dp8137703133696335] 
\draw  [draw opacity=0][fill={rgb, 255:red, 248; green, 82; blue, 28 }  ,fill opacity=1 ] (848.26,161.5) .. controls (848.26,159.43) and (849.93,157.76) .. (852,157.76) .. controls (854.07,157.76) and (855.74,159.43) .. (855.74,161.5) .. controls (855.74,163.57) and (854.07,165.24) .. (852,165.24) .. controls (849.93,165.24) and (848.26,163.57) .. (848.26,161.5) -- cycle ;

% Text Node
\draw (755,33.17) node [anchor=north west][inner sep=0.75pt]    {$\Omega _{0}$};
% Text Node
\draw (764.54,76.44) node [anchor=north west][inner sep=0.75pt]    {$\Omega _{1}$};
% Text Node
\draw (764.54,124.19) node [anchor=north west][inner sep=0.75pt]    {$\Omega _{2}$};
% Text Node
\draw (815.6,124.19) node [anchor=north west][inner sep=0.75pt]    {$\Omega _{3}$};
% Text Node
\draw (815.6,76.44) node [anchor=north west][inner sep=0.75pt]    {$\Omega _{4}$};

\end{tikzpicture}\\ \hline
&&&     \\
Connectivity graph &  \tikzset{every picture/.style={line width=0.75pt}} %set default line width to 0.75pt        

\begin{tikzpicture}[x=0.75pt,y=0.75pt,yscale=-1,xscale=1]
%uncomment if require: \path (0,423); %set diagram left start at 0, and has height of 423

%Shape: Circle [id:dp8361259212204646] 
\draw  [color={rgb, 255:red, 0; green, 0; blue, 0 }  ,draw opacity=1 ] (30.5,76.25) .. controls (30.5,63.55) and (40.8,53.25) .. (53.5,53.25) .. controls (66.2,53.25) and (76.5,63.55) .. (76.5,76.25) .. controls (76.5,88.95) and (66.2,99.25) .. (53.5,99.25) .. controls (40.8,99.25) and (30.5,88.95) .. (30.5,76.25) -- cycle ;
%Shape: Ellipse [id:dp9095743513490899] 
\draw  [color={rgb, 255:red, 0; green, 0; blue, 0 }  ,draw opacity=1 ] (112.36,48.5) .. controls (112.36,36.35) and (123.02,26.5) .. (136.18,26.5) .. controls (149.33,26.5) and (160,36.35) .. (160,48.5) .. controls (160,60.66) and (149.33,70.51) .. (136.18,70.51) .. controls (123.02,70.51) and (112.36,60.66) .. (112.36,48.5) -- cycle ;
%Shape: Ellipse [id:dp597786031767688] 
\draw  [color={rgb, 255:red, 0; green, 0; blue, 0 }  ,draw opacity=1 ] (112.36,104) .. controls (112.36,91.84) and (123.02,81.99) .. (136.18,81.99) .. controls (149.33,81.99) and (160,91.84) .. (160,104) .. controls (160,116.15) and (149.33,126) .. (136.18,126) .. controls (123.02,126) and (112.36,116.15) .. (112.36,104) -- cycle ;
%Straight Lines [id:da6345139663584918] 
\draw    (77.5,76.25) -- (112.36,48.5) ;
%Straight Lines [id:da5269232348900899] 
\draw    (77.5,76.25) -- (113.39,104) ;

% Text Node
\draw (128.18,41) node [anchor=north west][inner sep=0.75pt]  [font=\small,color={rgb, 255:red, 0; green, 0; blue, 0 }  ,opacity=1 ]  {$\Omega _{1} \ $};
% Text Node
\draw (128.18,97.5) node [anchor=north west][inner sep=0.75pt]  [font=\small,color={rgb, 255:red, 0; green, 0; blue, 0 }  ,opacity=1 ]  {$\Omega _{2} \ $};
% Text Node
\draw (45.5,71.25) node [anchor=north west][inner sep=0.75pt]  [font=\small,color={rgb, 255:red, 0; green, 0; blue, 0 }  ,opacity=1 ]  {$\Omega _{0} \ $};

\end{tikzpicture} &  \tikzset{every picture/.style={line width=0.75pt}} %set default line width to 0.75pt        

\begin{tikzpicture}[x=0.75pt,y=0.75pt,yscale=-1,xscale=1]
%uncomment if require: \path (0,423); %set diagram left start at 0, and has height of 423

%Shape: Circle [id:dp573269810730205] 
\draw  [color={rgb, 255:red, 0; green, 0; blue, 0 }  ,draw opacity=1 ] (215,82.25) .. controls (215,69.55) and (225.3,59.25) .. (238,59.25) .. controls (250.7,59.25) and (261,69.55) .. (261,82.25) .. controls (261,94.95) and (250.7,105.25) .. (238,105.25) .. controls (225.3,105.25) and (215,94.95) .. (215,82.25) -- cycle ;
%Shape: Ellipse [id:dp02152432453608566] 
\draw  [color={rgb, 255:red, 0; green, 0; blue, 0 }  ,draw opacity=1 ] (296.86,54.5) .. controls (296.86,42.35) and (307.52,32.5) .. (320.68,32.5) .. controls (333.83,32.5) and (344.5,42.35) .. (344.5,54.5) .. controls (344.5,66.66) and (333.83,76.51) .. (320.68,76.51) .. controls (307.52,76.51) and (296.86,66.66) .. (296.86,54.5) -- cycle ;
%Shape: Ellipse [id:dp7656429201637585] 
\draw  [color={rgb, 255:red, 0; green, 0; blue, 0 }  ,draw opacity=1 ] (296.86,110) .. controls (296.86,97.84) and (307.52,87.99) .. (320.68,87.99) .. controls (333.83,87.99) and (344.5,97.84) .. (344.5,110) .. controls (344.5,122.15) and (333.83,132) .. (320.68,132) .. controls (307.52,132) and (296.86,122.15) .. (296.86,110) -- cycle ;
%Straight Lines [id:da7407361847057286] 
\draw    (262,82.25) -- (296.86,54.5) ;
%Straight Lines [id:da25266268526220026] 
\draw    (262,82.25) -- (297.89,110) ;
%Straight Lines [id:da08629506655288322] 
\draw    (320.68,87.99) -- (320.68,76.51) ;

% Text Node
\draw (312.68,47) node [anchor=north west][inner sep=0.75pt]  [font=\small,color={rgb, 255:red, 0; green, 0; blue, 0 }  ,opacity=1 ]  {$\Omega _{1} \ $};
% Text Node
\draw (312.68,103.5) node [anchor=north west][inner sep=0.75pt]  [font=\small,color={rgb, 255:red, 0; green, 0; blue, 0 }  ,opacity=1 ]  {$\Omega _{2} \ $};
% Text Node
\draw (230,77.25) node [anchor=north west][inner sep=0.75pt]  [font=\small,color={rgb, 255:red, 0; green, 0; blue, 0 }  ,opacity=1 ]  {$\Omega _{0} \ $};

\end{tikzpicture} & \tikzset{every picture/.style={line width=0.75pt}} %set default line width to 0.75pt        

\begin{tikzpicture}[x=0.75pt,y=0.75pt,yscale=-1,xscale=1]
%uncomment if require: \path (0,423); %set diagram left start at 0, and has height of 423

%Shape: Circle [id:dp5348301065943495] 
\draw  [color={rgb, 255:red, 0; green, 0; blue, 0 }  ,draw opacity=1 ] (386,178.75) .. controls (386,166.05) and (396.3,155.75) .. (409,155.75) .. controls (421.7,155.75) and (432,166.05) .. (432,178.75) .. controls (432,191.45) and (421.7,201.75) .. (409,201.75) .. controls (396.3,201.75) and (386,191.45) .. (386,178.75) -- cycle ;
%Shape: Ellipse [id:dp7201666101731059] 
\draw  [color={rgb, 255:red, 0; green, 0; blue, 0 }  ,draw opacity=1 ] (468.11,91.5) .. controls (468.11,79.35) and (478.77,69.5) .. (491.93,69.5) .. controls (505.08,69.5) and (515.75,79.35) .. (515.75,91.5) .. controls (515.75,103.66) and (505.08,113.51) .. (491.93,113.51) .. controls (478.77,113.51) and (468.11,103.66) .. (468.11,91.5) -- cycle ;
%Shape: Ellipse [id:dp9596724007098729] 
\draw  [color={rgb, 255:red, 0; green, 0; blue, 0 }  ,draw opacity=1 ] (468.61,148.75) .. controls (468.61,136.6) and (479.27,126.75) .. (492.43,126.75) .. controls (505.58,126.75) and (516.25,136.6) .. (516.25,148.75) .. controls (516.25,160.9) and (505.58,170.75) .. (492.43,170.75) .. controls (479.27,170.75) and (468.61,160.9) .. (468.61,148.75) -- cycle ;
%Shape: Ellipse [id:dp7825280431556667] 
\draw  [color={rgb, 255:red, 0; green, 0; blue, 0 }  ,draw opacity=1 ] (468.61,206) .. controls (468.61,193.84) and (479.27,183.99) .. (492.43,183.99) .. controls (505.58,183.99) and (516.25,193.84) .. (516.25,206) .. controls (516.25,218.15) and (505.58,228) .. (492.43,228) .. controls (479.27,228) and (468.61,218.15) .. (468.61,206) -- cycle ;
%Shape: Ellipse [id:dp37578170591282145] 
\draw  [color={rgb, 255:red, 0; green, 0; blue, 0 }  ,draw opacity=1 ] (468.61,266) .. controls (468.61,253.84) and (479.27,243.99) .. (492.43,243.99) .. controls (505.58,243.99) and (516.25,253.84) .. (516.25,266) .. controls (516.25,278.15) and (505.58,288) .. (492.43,288) .. controls (479.27,288) and (468.61,278.15) .. (468.61,266) -- cycle ;
%Straight Lines [id:da9488723611024859] 
\draw    (492.18,113.51) -- (492.43,126.75) ;
%Straight Lines [id:da15240551366353272] 
\draw    (492.18,170.76) -- (492.18,183.99) ;
%Straight Lines [id:da3908771683492003] 
\draw    (432,178.75) -- (468.36,91.5) ;
%Straight Lines [id:da32382365826530624] 
\draw    (432,178.75) -- (468.36,148.75) ;
%Straight Lines [id:da5311396128772237] 
\draw    (432,178.75) -- (468.61,206) ;
%Straight Lines [id:da4362633427184195] 
\draw    (492.93,228) -- (492.93,243.99) ;
%Straight Lines [id:da16022670531845118] 
\draw    (432,178.75) -- (468.61,266) ;

% Text Node
\draw (483.93,259) node [anchor=north west][inner sep=0.75pt]  [font=\small,color={rgb, 255:red, 0; green, 0; blue, 0 }  ,opacity=1 ]  {$\Omega _{4} \ $};
% Text Node
\draw (400.5,172.75) node [anchor=north west][inner sep=0.75pt]  [font=\small,color={rgb, 255:red, 0; green, 0; blue, 0 }  ,opacity=1 ]  {$\Omega _{0} \ $};
% Text Node
\draw (483.43,84.5) node [anchor=north west][inner sep=0.75pt]  [font=\small,color={rgb, 255:red, 0; green, 0; blue, 0 }  ,opacity=1 ]  {$\Omega _{1} \ $};
% Text Node
\draw (483.93,141.75) node [anchor=north west][inner sep=0.75pt]  [font=\small,color={rgb, 255:red, 0; green, 0; blue, 0 }  ,opacity=1 ]  {$\Omega _{2} \ $};
% Text Node
\draw (483.93,199) node [anchor=north west][inner sep=0.75pt]  [font=\small,color={rgb, 255:red, 0; green, 0; blue, 0 }  ,opacity=1 ]  {$\Omega _{3} \ $};

\end{tikzpicture} \\ \hline

Subdomain boundaries & $\Gamma_0 = \texttt{[01,02]}$  &  $\Gamma_0 = \texttt{[01,02]}$ & $\Gamma_0 =\texttt{[01,02,03,04]}$ \\ 
& $\Gamma_1 =\texttt{[-01]}$  & $\Gamma_1 = \texttt{[-01,12]}$  & $\Gamma_1 =\texttt{[-01,12,14]}$  \\
& $\Gamma_2 =\texttt{[-02]}$       & $\Gamma_2 = \texttt{[-02,-12]}$  & $\Gamma_2 =\texttt{[-02,-12,23]}$ \\
&     &  & $\Gamma_3 =\texttt{[-03,-23,34]}$  \\
&     &  & $\Gamma_4 =\texttt{[-04,-14,-34]}$ \\ \hline 
 %$\kappa=8$&  \includegraphics[width=.8\linewidth]{figs/Err80e.pdf} &  \includegraphics[width=.8\linewidth]{figs/Err81e.pdf}\\\hline
 \end{tabular}%}
\caption{Domain configuration for different relevant cases. In the second row, we showcase a simplified representation for each domain. This allows us to establish the connectivity graph between subdomains in the third row. In the last row, each $\Gamma_i$ is represented as a union of interface indices, where a minus sign indicates a normal swapping (i.e.~$\texttt{swap}=-1)$ in \Cref{alg:process_segment}. Note that the grid problem (right-hand column) corresponds to the scheme shown in \Cref{fig:Skeleton}.}
\label{tab:DomainConfiguration}
\end{center} 
\end{table} 
%%%%%%%%%%%%%%%%%%%%%%%%%%%%%%%%%%%%%%%%%%%%%%%%
\subsection{Implementation in Bempp-cl}\label{subs:implementationbempp}
%%%%%%%%%%%%%%%%%%%%%%%%%%%%%%%%%%%%%%%%%%%%%%%%
The top-down approach proposed here results in compact and intuitive routines for invoking the Calderón and transmission operators. To illustrate this further, we provide details on its implementation in Bempp-cl \cite{betcke2021bempp}. In Bempp-cl, RWG function spaces of degree \texttt{0} are typically defined on a mesh \texttt{grid} as follows:
\begin{align*}
\text{RWG:} & ~\text{\texttt{function\_space(grid, "RWG", 0)}}.
\end{align*}
This definition facilitates the assembly of operators in the form:
\be
\texttt{operator(domain, range, test)}
\ee 
with function spaces specifying the domain, range and test function spaces.  With our procedure, RWG function spaces over a subdomain $\Gamma_i$ of the skeleton mesh \texttt{grid} are defined, for $i \in \{0,\ldots,M\}$, as
\begin{align*}
\text{RWG}_i \equiv \texttt{spaces[i] }\text{:} & ~ \text{\texttt{function\_space(grid, "RWG", 0, segments, swapped\_normals)}}.
\end{align*}
Therein, \texttt{segments} represents the list of indices associated with $\Gamma_i$ and \texttt{swapped\_normals} denotes the indices where the normal is swapped, yielding the list \texttt{spaces}. For clarity, using the ``Triple point'' case (second column) in \Cref{tab:DomainConfiguration}, $\Gamma_1 = \texttt{[-01,12]}$ translates to: 
\begin{align*}
\texttt{segments} & =\texttt{[01,12]} \\
\texttt{swapped\_normals}& = \texttt{[01]}.
\end{align*} 
With the function spaces \texttt{spaces} defined, the Calderón operators can be invoked directly for $i \in \{0, \ldots, M\}$:
\be 
\texttt{multitrace\_operator(spaces[i], spaces[i], spaces[i])}.
\ee 
Consequently \eqref{eq:trick_transmission} ensures that for $i \in \{0,\ldots, M\}$ and $j\in\Lambda_i$, the transmission operator $\tmI^{ij}$ is obtained seamlessly as:
\be
\texttt{identity(spaces[j], spaces[j], spaces[i])}.
\ee 
%%%%%%%%%%%%%%%%%%%%%%%%%%%%%%%%%%%%%%%%%%%%%%%%%%%%%%%%%%%%
\subsection{Preconditioning}\label{sec:Prec}
%%%%%%%%%%%%%%%%%%%%%%%%%%%%%%%%%%%%%%%%%%%%%%%%%%%%%%%%%%%%
To avoid the computational cost of barycentric refinement required for traditional multiplicative Calderón preconditioning \cite{CalderonPMCHWT}, an OSRC-based preconditioner was preferred \cite{fierro2023osrc}. We introduce the electric-to-magnetic operator \cite[Eq.~8]{fierro2023osrc}:
$$
\text{EtM}_k^i : = - \left(\mT_k^i\right)^{-1} \left( \frac{\mI^i}{2} + \mK_k^i\right)
$$
which provides an approximate inverse to $\mT_k^i$, making it a suitable candidate for preconditioning for the EFIO in \eqref{eq:boundary_operators}. Its approximation on smooth surfaces is given by the Padé-approximated MtE operator:
$$
\widetilde{\mT}^i_k \equiv \tilde{\Pi}^{-1}_{k,\eps, N_p}  \approx \text{EtM}_k^i. % =  - \bTheta^{-1} \bLambda_2^{-1} \left( \mI_n R_0 - \sum_{n = 1}^{N_p} \frac{A_j}{B_j} .
$$
For any $\mu>0$, which accounts for the accuracy of the approximation, depending on $\eps$ (the damped wavenumber) and $N_p \geq 1$ (the Padé approximation order), the OSRC multiple traces operator---referred to as ``block-OSRC preconditioner'' in what follows---is defined as:
\be 
 \tbmA := \text{diag}(\tbmA_i) \quad \text{with} \quad \tbmA_i  \equiv\tbmA^i_{k}  : = \begin{bmatrix} 0  &  \varrho_i^{-1}\widetilde{\mT}^i_k  \\ -  \varrho_i \widetilde{\mT}^i_k  & 0 \end{bmatrix}, \quad i \in \{0, \ldots, M\}.
\ee
%The preconditioned schemes yield a second-kind Fredholm equation, resulting in h-independent spectral condition number, and linear and superlinear convergent $\bL^2$-GMRES \cite{ESCAPILINCHAUSPE2021220}, with the bound extending to $l^2$-classic GMRES after accounting for mesh constrast-dependent terms.
As we will see in \Cref{sec:NumExp}, the preconditioned schemes result in improved spectral properties (e.g., eigenvalues distribution, field of values, singular values), leading to moderate dependence on $h$ and grading \cite{ESCAPILINCHAUSPE2021220} for GMRES.
%%%%%%%%%%%%%%%%%%%%%%%%%%%%%%%%%%%%%%%%%%%%%%%%%%%%%%%%
\section{Numerical experiments}\label{sec:NumExp}
%%%%%%%%%%%%%%%%%%%%%%%%%%%%%%%%%%%%%%%%%%%%%%%%%%%%%%%%
%%%%%%%%%%%%%%%%%%%%%%%%%%%%%%%%%%%%%%%%%%%%%%%%%%%%%%%%
\subsection{Methodology}
%%%%%%%%%%%%%%%%%%%%%%%%%%%%%%%%%%%%%%%%%%%%%%%%%%%%%%%%
In this section, we solve EM scattering problems using Bempp-cl version 0.3.2 \cite{betcke2021bempp}. Following FAIR principles \cite{wilkinson2016fair}, all simulations are fully reproducible within a Python notebook, provided as part of the $\texttt{mtf}$ package\footnote{Refer to \url{https://github.com/betckegroup/multitrace_implementation_paper}.}. We assemble the operators using Numba dense assembly. The tests were executed on a server with 64 cores, each with 8GB RAM, powered by dual AMD EPYC 7302 processors, using Python 3.10. We introduce two cases, named A and B, corresponding to different wave frequencies, with their corresponding material parameters summarized in \Cref{table:overviewInitial}.

The incident field is defined as:
\be \label{eq:incidentNumexp}
\bE^\text{inc} = \bp e^{\imath k_0 \bd \cdot \bx}
\ee 
where $k_0 > 0$ and $\bp, \bd \in \IC^3$. The numerical experiments are structured as follows:
\begin{enumerate}
    \item In \Cref{subsec:Sphere}, we study and verify the convergence of the local MTF for a unit sphere, comparing the STF to local MTF for $M=1$ and $M=2$ (i.e.~scattering by two half-spheres);
\item In \Cref{subsec:precandperf}, we analyze the performance of OSRC preconditionign on GMRES for two half-spheres and two half-cubes;
\item In \Cref{subsec:complex}, we examine complex objects and study how the scheme behaves for increasing values of $M$.
\end{enumerate}

\begin{table}[t]
\renewcommand\arraystretch{1.2}
\begin{center}
\footnotesize
%\resizebox{9.1cm}{!} {
\begin{tabular}{|c|c|c|c|c|} 
\hline
\multicolumn{5}{|c|}{\bf Parameter Values}\\
\hhline{|=|=|=|=|=|}
 \multicolumn{2}{|c|}{Case}&  A& B  \\ \hline
\multicolumn{2}{|c|}{$\upmu_r$} & $1.0$ & $1.0$  \\ \hline  
\multicolumn{2}{|c|}{$\upepsilon_r$} & $2.1$  & $1.9$  \\ \hline 
\multicolumn{2}{|c|}{$f$ MHz} & $143.1$  & $238.6$ \\ \hline
 \multicolumn{2}{|c|}{$\lambda$ (m)} & $2.1$ & $1.3$ \\ \hline 
 \multicolumn{2}{|c|}{$k_0$} & $3.0$ & $5.0$  \\ \hline 
 \multicolumn{2}{|c|}{$k_1$} & $4.3$ & $6.9$ \\ \hline 
\end{tabular}%}
\hspace{2.0cm} 
\begin{tabular}{|c|c|c|c|c|}    
\hline
\multicolumn{5}{|c|}{\bf Results for $\texttt{MTF(2)}$}\\
\hhline{|=|=|=|=|=|}
 \multicolumn{2}{|c|}{Case}&  A& B  \\ \hline
 \multicolumn{2}{|c|}{$N$} & $6360$ & $15900$  \\ \hline  
\multicolumn{2}{|c|}{$[\text{RCS}_z]_{L^2([0,\pi])}$} & $1.23\% $ & $2.53\%$  \\ \hline  
\multicolumn{2}{|c|}{$[\gamma_D \bu]_{\bL^2(\Gamma_{12})}$} & $2.22\%$  & $1.15\%$  \\ \hline 
\multicolumn{2}{|c|}{$[\gamma_N \bu]_{\bL^2(\Gamma_{12})}$} & $2.27\%$  & $0.39\%$  \\ \hline 
\end{tabular}%}
\end{center}
\caption{(Left) Overview of the relative material parameters, frequency $f$, wavelength $\lambda$ and wavenumbers $k_i$, $i=0,1$ for each case. (Right) Results for $\texttt{MTF(2)}$ for $r=10$. We show the relative errors for $\text{RCS}_z$ and the Dirichlet and Neumann jumps over the interface $\Gamma_{01}$.}
\label{table:overviewInitial}
\end{table}  

\subsection{Scattering by a sphere}\label{subsec:Sphere}First, we consider the scattering of a unit sphere for cases A and B in \Cref{table:overviewInitial}. We set the incident field in \eqref{eq:incidentNumexp} with
$$
 \bp : =  \begin{bmatrix} 0, 0,  1 \end{bmatrix}^T \quad \text{and}\quad \bd :=  \begin{bmatrix} 1, 0,  0 \end{bmatrix}^T .
 $$
We use the MIE series \cite{miebookabsorbtion} for the exact solution and evaluate $\text{RCS}_z$, the $z$-component of the far-field at $z=0$. The resulting linear systems are solved using direct LU decomposition (see \cite{betcke2021bempp} for more details). We compare the following methods:
\begin{enumerate}
    \item \texttt{STF(1)}: the STF (PMCHWT) for $M=1$;
    \item \texttt{MTF(1)}: the local MTF for $M=1$;
    \item \texttt{MTF(2)}: the local MTF for $M=2$ (two half-spheres).
\end{enumerate}
In \texttt{MTF(2)}, we divide the unit sphere sphere into two regions $\Omega_1 := \{ \bx = (x,y,z) : \|\bx\|_2 \leq 1 \text{ and } x <1\}$ and $\Omega_2 := \{ \bx = (x,y,z): \|\bx\|_2 <  1 \text{ and } x > 1\}$, where $\Omega_2$ inherits the material parameters from $\Omega_1$. We begin by representing the scattered and total fields for \texttt{MTF(2)} in \Cref{fig:s1-sol} for $r=10$, with case A shown at the top and case B at the bottom.
\begin{figure}[htb!]
\center
 \begin{subfigure}[b]{1\textwidth}
 \centering
\includegraphics[width=1\textwidth]{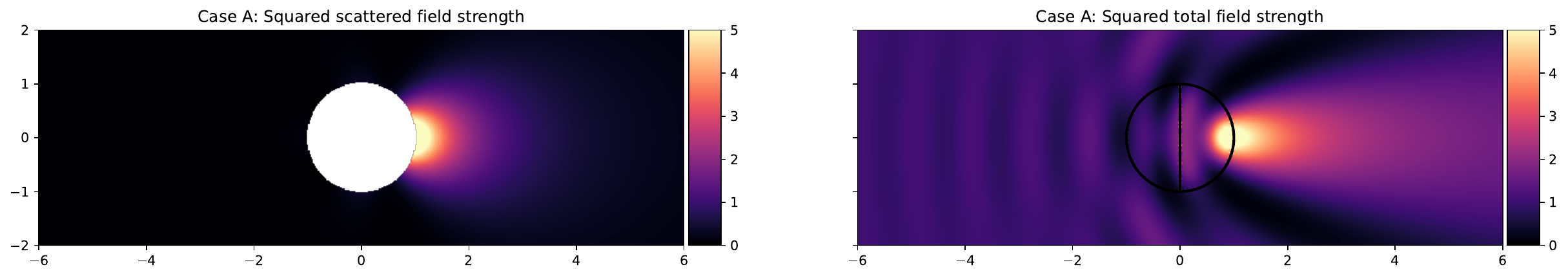}
\caption{Case A.}
\end{subfigure}
\begin{subfigure}[b]{\textwidth}
\includegraphics[width=1\textwidth]{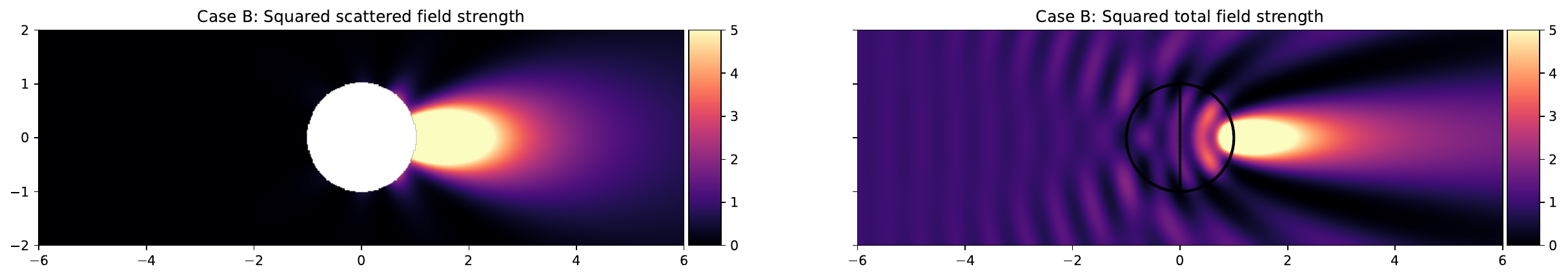}
\caption{Case B.}
\end{subfigure}
\caption{$\texttt{MTF(2)}$ for Case A (top) and Case B (bottom): Squared amplitude in plane $z=0$  for the scattered field $|\bE^\text{sc}|^2$ (left) and total field $|\bE|^2$ (right) for a precision of $r=10$ points per wavelength. }
\label{fig:s1-sol}       
\end{figure}
In \Cref{table:overviewInitial} (right), we present:
\begin{itemize}
    \item $N$: the size of the local MTF matrix;
    \item $[\text{RCS}_z]_{L^2([0,\pi])}$: the relative $L^2([0,\pi])$ error for RCS$_z$;
    \item $[\bgamma_\cdot \bu ]_{\bsL^2(\Gamma_{12}) }$, $\cdot \in \{D,N\}$ the relative $\bsL^2$-norm error for the Dirichlet and Neumann jumps over $\Gamma_{12}$.
\end{itemize}    
We observe that the errors are low for both $\text{RCS}_z$ and the jumps, ranging from $0.39\%$ to $2.53\%$. Fig.~\ref{fig:Sphere}(a) (for Case A) and ~Fig.~\ref{fig:Sphere}(d) (for Case B) display $\text{RCS}_z$.

Next, we examine $h$-convergence by generating a sequence of increasingly refined meshes $\Sigma^h$ corresponding to precisions of $r \in [1,2, 5,10, 20, 30, 40]$ points per wavelength. For case B, at higher precisions $r \in [30,40]$, the solution is obtained through GMRES, as a direct solver was not feasible. Fig.~\ref{fig:Sphere}(b) and Fig.~\ref{fig:Sphere}(d) show the convergence results for $[\text{RCS}_z]_{L^2([0,\pi])}$ in Case A and Case B, respectively. We remark that the error for all three formulations decreases as $\mO (h^2)$, with convergence starting from low resolution at $r=2$ for Case A and $r=1$ for Case B.
The top $x$-axis represents the precision $r$. Interestingly, the three formulations exhibit similar convergence patterns; the presence of the triple point in $\texttt{MTF(2)}$ does not degrade far-field convergence. Finally, Fig.~\ref{fig:Sphere}(d) and Fig.~\ref{fig:Sphere}(f) show $[\gamma_\cdot \bu ]_{\bL^2(\Gamma_{12}) }$ for $\cdot
\{D,N\}$ in $\texttt{MTF(2)}$, with a decrease $\mO(h^2)$. 

\begin{figure*}[htb!]
    \centering
    \begin{minipage}{0.32\linewidth}
\begin{figure}[H]
\includegraphics[width=0.95\linewidth]{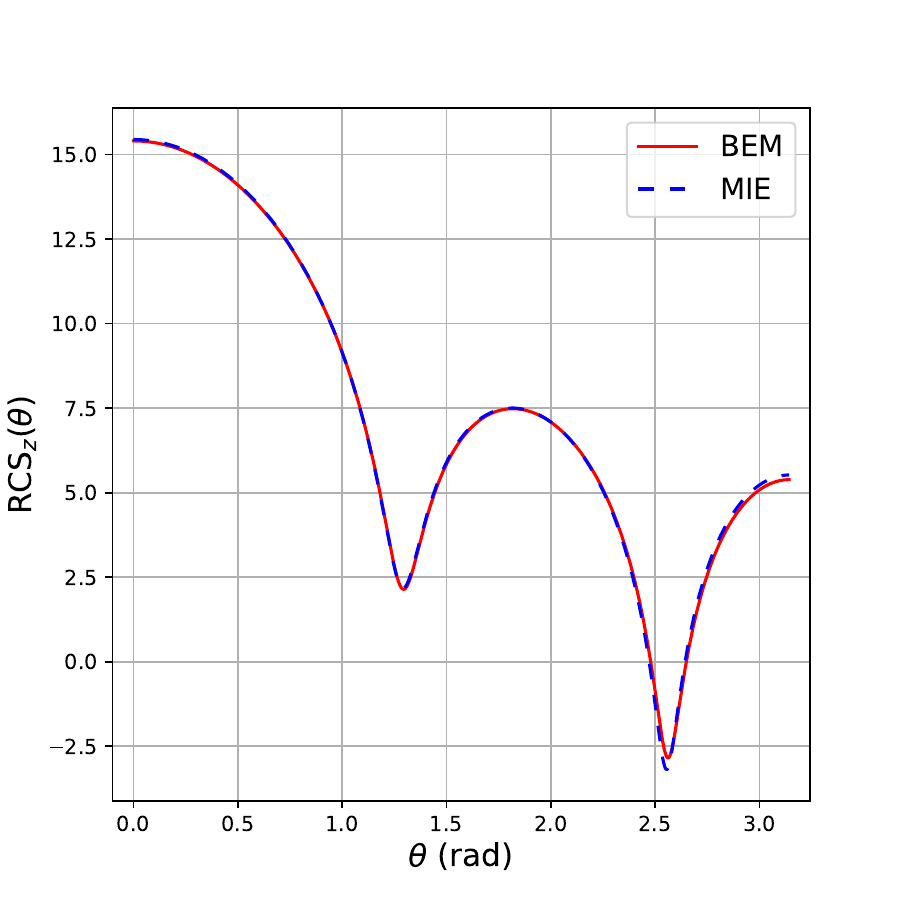}
\vspace{-0.2cm}
\caption*{(a)}
\end{figure}
\vspace{-0.9cm}
\begin{figure}[H]
\includegraphics[width=0.95\linewidth]{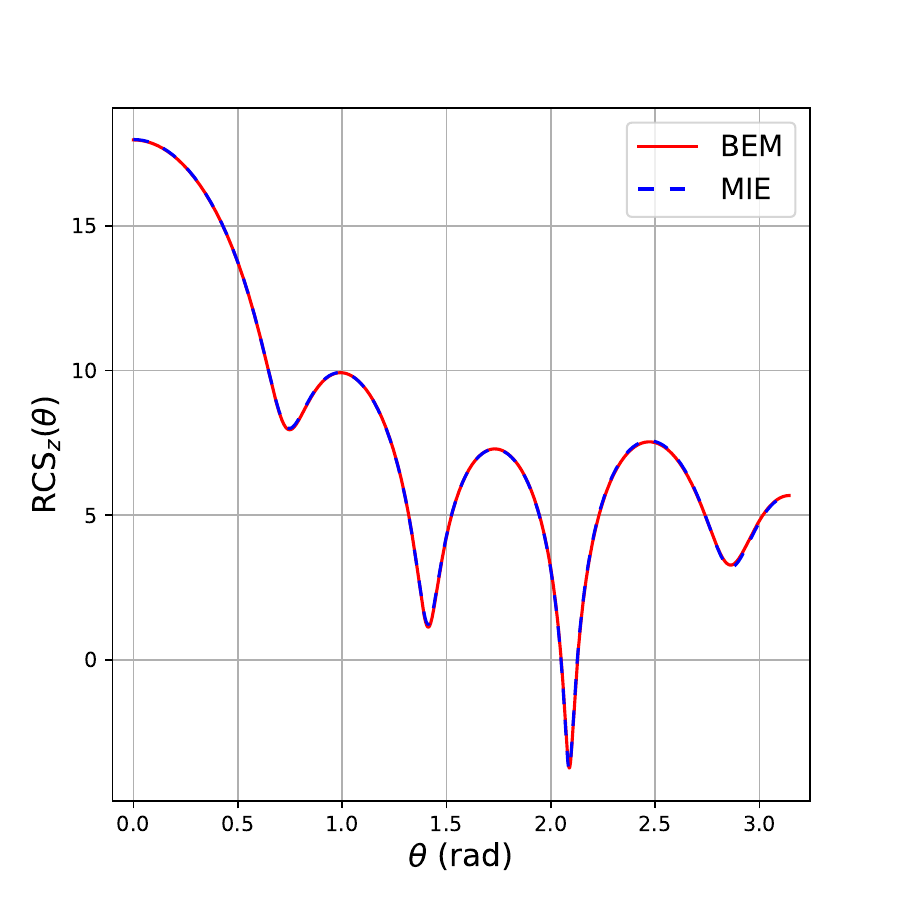}
%\vspace{-0.2cm}
\caption*{(d)}
\end{figure}
    \end{minipage}
    \begin{minipage}{0.32\linewidth}
    \begin{figure}[H]
\includegraphics[width=0.95\linewidth]{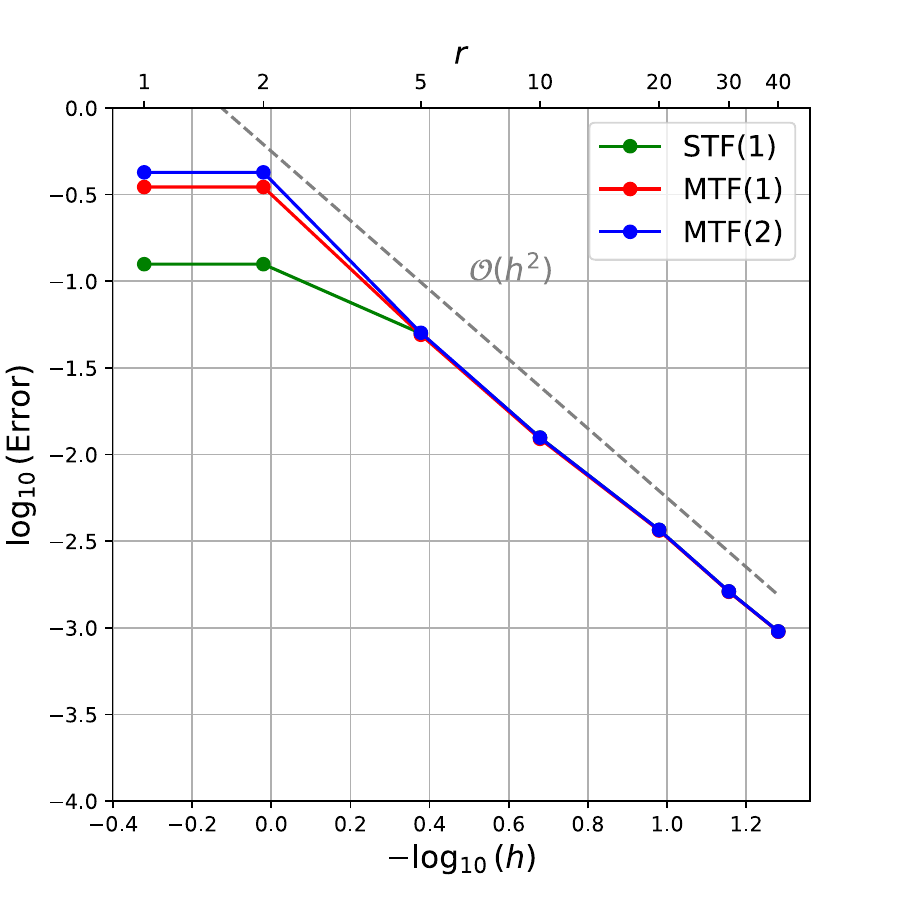}
%\vspace{-0.2cm}
\caption*{(b)}
\end{figure}
\vspace{-0.9cm}
\begin{figure}[H]
\includegraphics[width=0.95\linewidth]{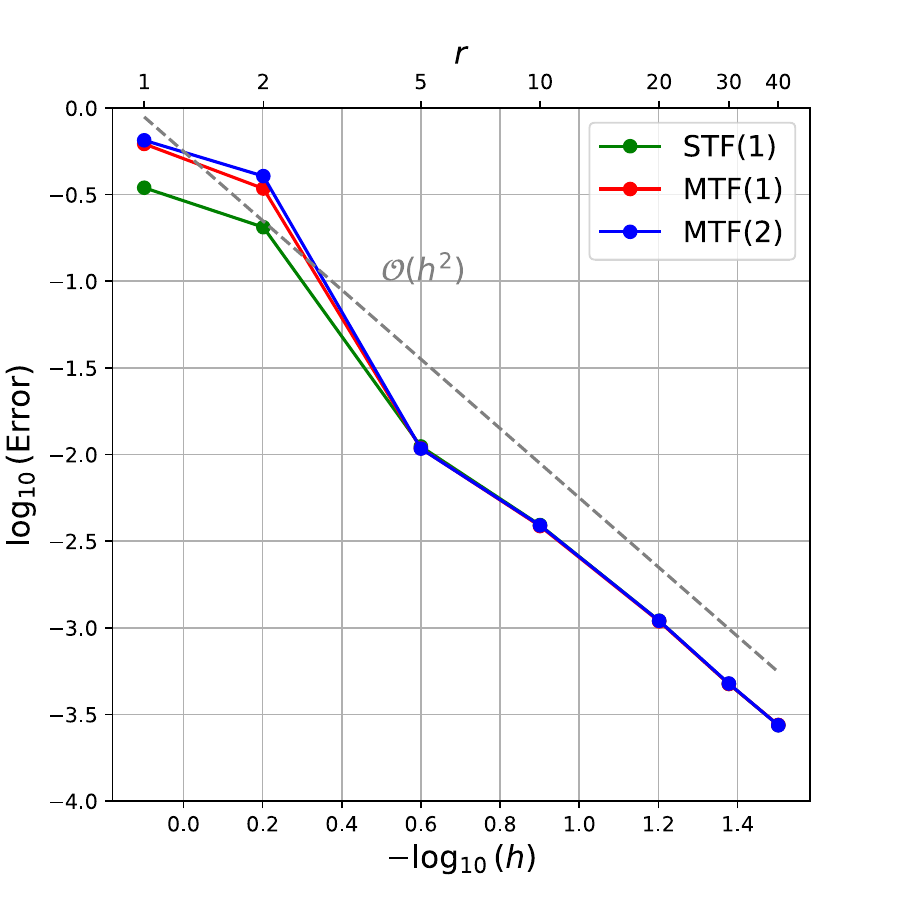}
%\vspace{-0.2cm}
\caption*{(e)}
\end{figure}
    \end{minipage}{}
  \begin{minipage}{0.32\linewidth}
    \begin{figure}[H]
\includegraphics[width=0.95\linewidth]{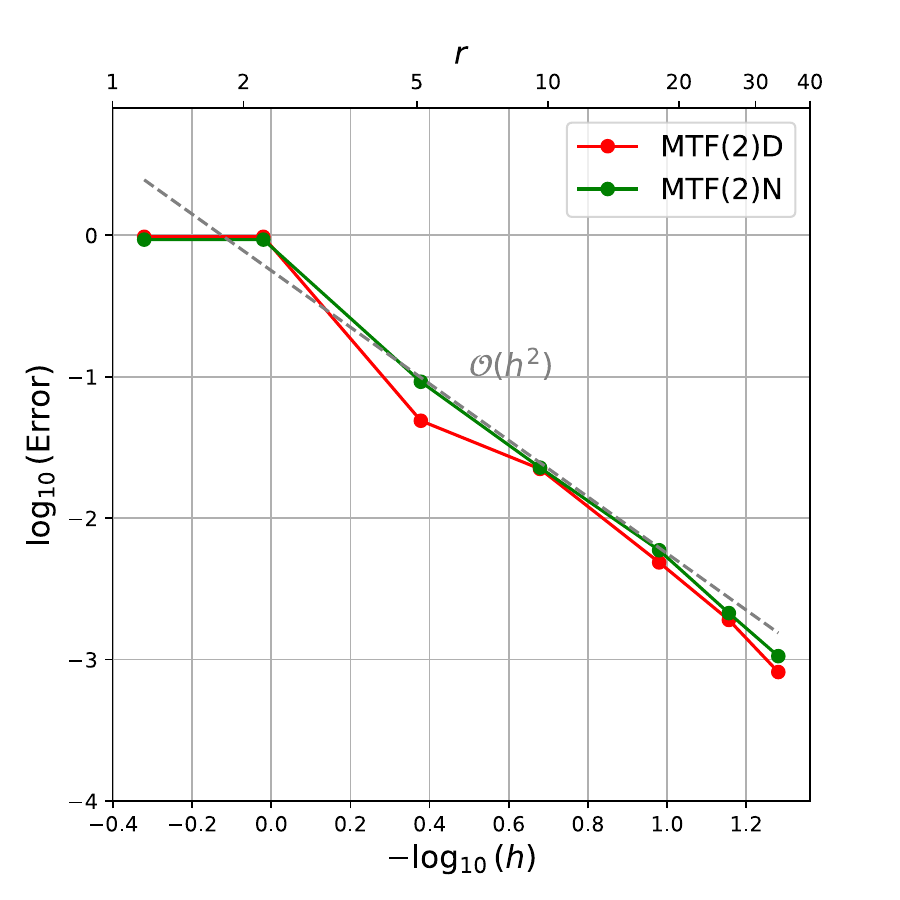}
%\vspace{-0.2cm}
\caption*{(c)}
\end{figure}
\vspace{-0.9cm}
\begin{figure}[H]
\includegraphics[width=0.95\linewidth]{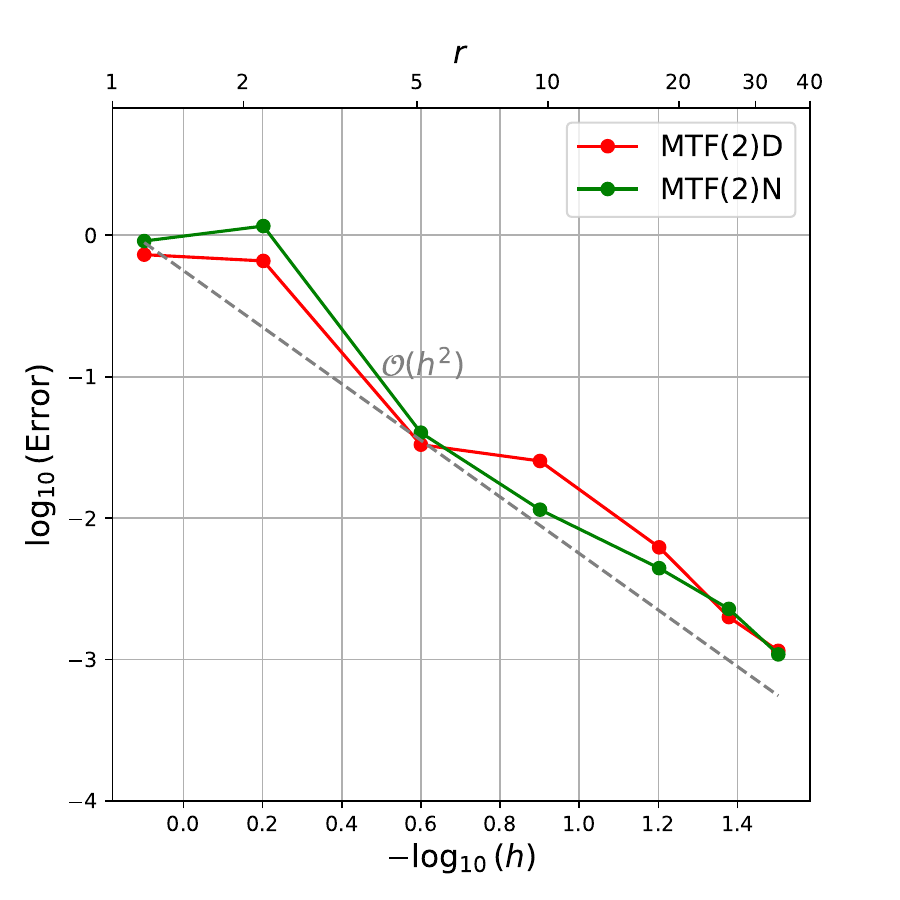}
%\vspace{-0.2cm}
\caption*{(f)}
\end{figure}
    \end{minipage}{}
    \caption{Case A (top) and Case B (bottom): Results for unit sphere with a known analytic solution and varying problem complexity. Left (a) (d): $\text{RCS}_z(\theta)$ for $r=10$. Middle (b) (e): Relative $L^2$ error in RCS$_z$ versus $h$. Right (c) (d): Relative $\bL^2$-error for jumps over interface $\Gamma_{12}$ versus $h$.}
    \label{fig:Sphere}
\end{figure*}

\subsection{Preconditioning and performance}\label{subsec:precandperf}
We set the incident field in \eqref{eq:incidentNumexp} for Case A and Case B in \Cref{table:overviewInitial} with
$$
 \bp : =  \begin{bmatrix} 1 + \imath ,2,  -1 - \frac{\imath}{3} \end{bmatrix}^T \quad \text{and}\quad \bd := \frac{1}{\sqrt{14}}\begin{bmatrix} 1, 2,  3 \end{bmatrix}^T 
 $$
We study GMRES for $\texttt{MTF(2)}$ in \Cref{subsec:Sphere} and extend the analysis to include scattering by a half-cube, defined as $\Omega_1 : \{ (x,y,z) \in(0,1) \times (0,1) \times (0,0.5), \}$ and $\Omega_2 := \{ (x,y,z)\in(0,1) \times (0,1) \times (0.5,1) \}$, where $\Omega_2$ inherits the material parameters from $\Omega_1$. To investigate the dependence of $h$ on GMRES convergence, we examine precisions of $r_0=10$ and $r_1=20$ points per wavelength, yielding linear systems of size $N_0$ and $N_1$, respectively. The half-sphere and half-cube meshes for both precisions are shown in \Cref{fig:meshes2}.
\begin{figure}[htb!]
\center
\includegraphics[width=\textwidth]{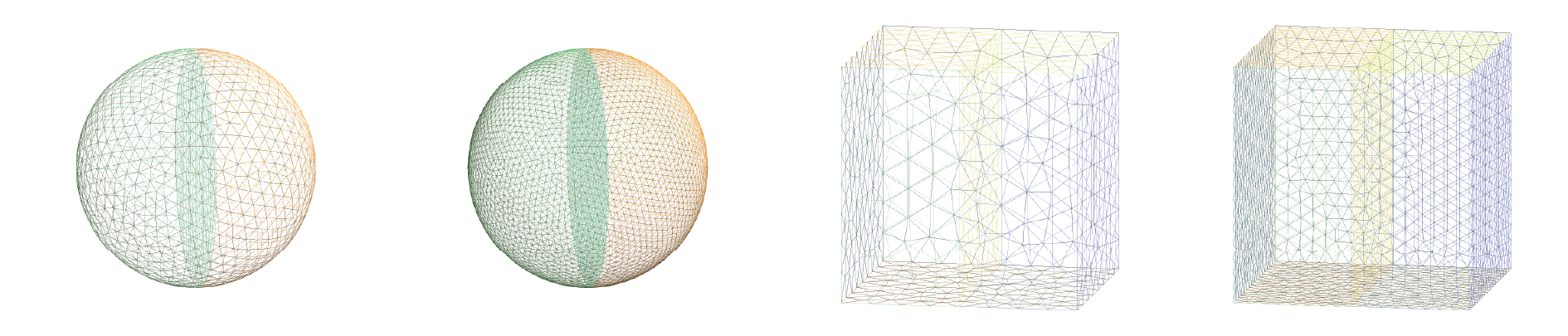}
\caption{Mesh for the half-sphere and half-cube for precisions of $r=10$ and $r=20$ points per wavelength.}
\label{fig:meshes2}       
\end{figure}
\iffalse
\begin{figure}[htb!]
\center
\includegraphics[width=.4\textwidth]{}
\includegraphics[width=.4\textwidth]{}
\caption{Mesh for the half-sphere and half-cube for a precision of $20$ points per wavelength.}
\label{fig:meshes2}       
\end{figure}
\fi
The linear system is solved using GMRES with a tolerance of $10^{-5}$ and a maximum iteration count of $2$,$000$. \Cref{fig:convGmres} presents the GMRES convergence results for the half-sphere (top) and the half-cube (bottom). 
\begin{figure}[htb!]
\center
\includegraphics[width=1\textwidth]{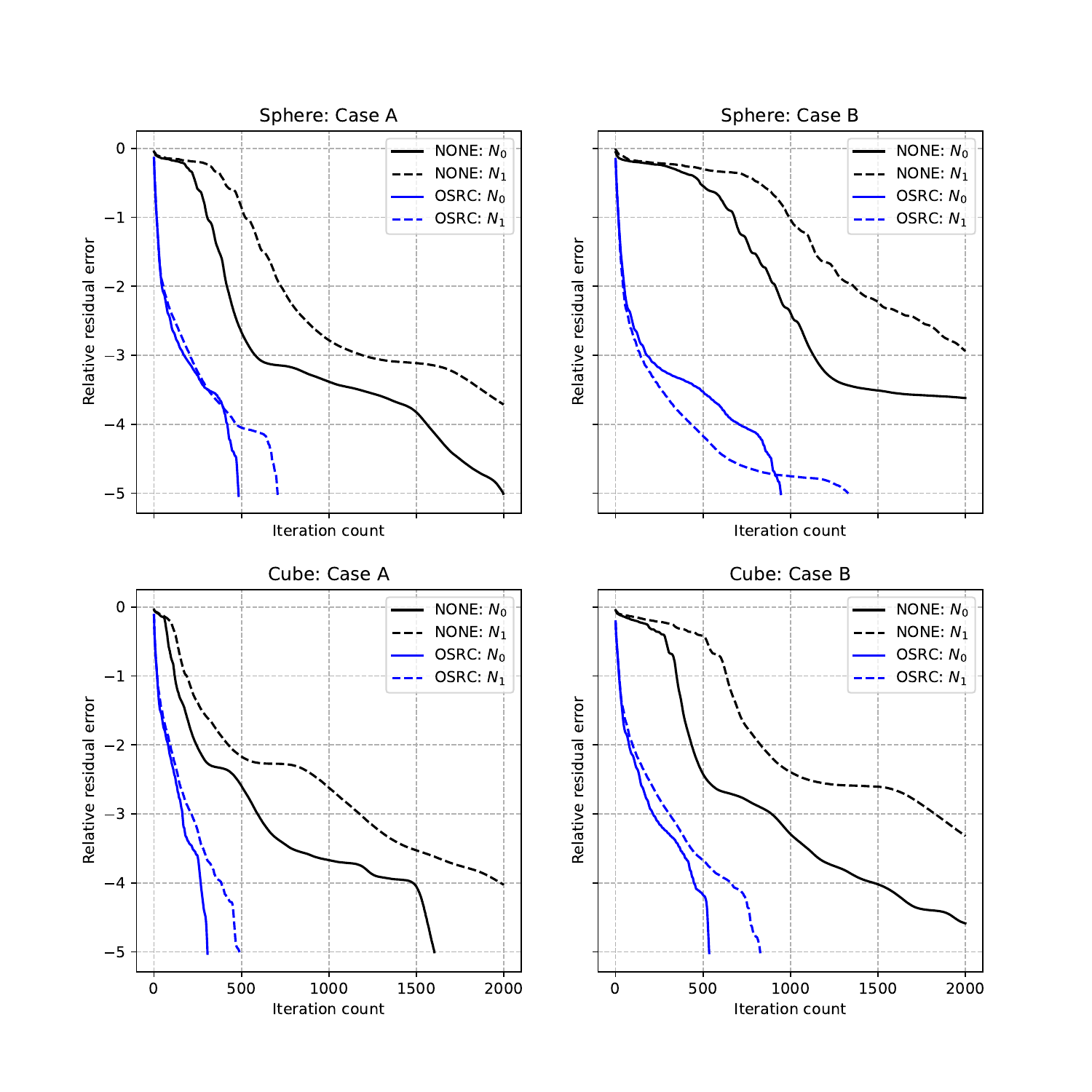}
\caption{MTF for $M=2$. Convergence of GMRES is shown for the half-sphere (top) and the half-cube (bottom) in Case A (left) and Case B (right). We compare unpreconditioned GMRES (black) with block-OSRC preconditioned GMRES (blue). Convergence results are provided for $N_0$ (solid lines) and $N_1>N_0$ (dashed lines), enabling the study of $h$-dependence, and demonstrating the robust performance of the block-OSRC preconditioned scheme.}
\label{fig:convGmres}       
\end{figure}
Each plot displays the convergence behavior for the block-OSRC preconditioner (in blue) compared to the unpreconditioned case (in black). Solid lines correspond to $r=10$, while dashed lines represent $r=20$. These plots reveal that the convergence behavior for both the half-sphere and the half-cube is similar. Block-OSRC preconditioning demonstrates robustness with respect to $h$, especially in the initial stages of convergence. In contrast, the non-preconditioned formulation shows to be more sensitive to $h$ and converges more slowly, failing in most cases to converge to the tolerance. In summary, preconditioning is indispensable. Block-OSRC preconditioning not only accelerates convergence by a factor of more than $3$, but also ensures convergence at higher frequencies as demonstrated in Case B, particularly for the sphere. This underscores the feasibility of the method and its suitability as a robust computational framework. 

Although providing only partial information on GMRES \cite{embree2022descriptive}, we examine the distribution of eigenvalues for $r=20$. Eigenvalues for the block-OSRC preconditioned formulation are obtained iteratively using the $\texttt{scipy.sparse.linalg.eigs}$ routine based on Arnoldi iteration. \Cref{fig:eigvals} shows the eigenvalues distribution for Case A and $r=10$ for both the half-sphere (top) and the half-cube (bottom). We remark that the distribution is similar. The left-hand side of \Cref{fig:eigvals} shows the overall distribution, while the right-hand side provides a zoomed view around the origin. We remark that, for the OSRC, the eigenvalues are bounded away from zero. %Additionally, each plot includes the maximum and minimum eigenvalues and the spectral condition number.

\begin{figure}[htb!]
\center
\includegraphics[width=1\textwidth]{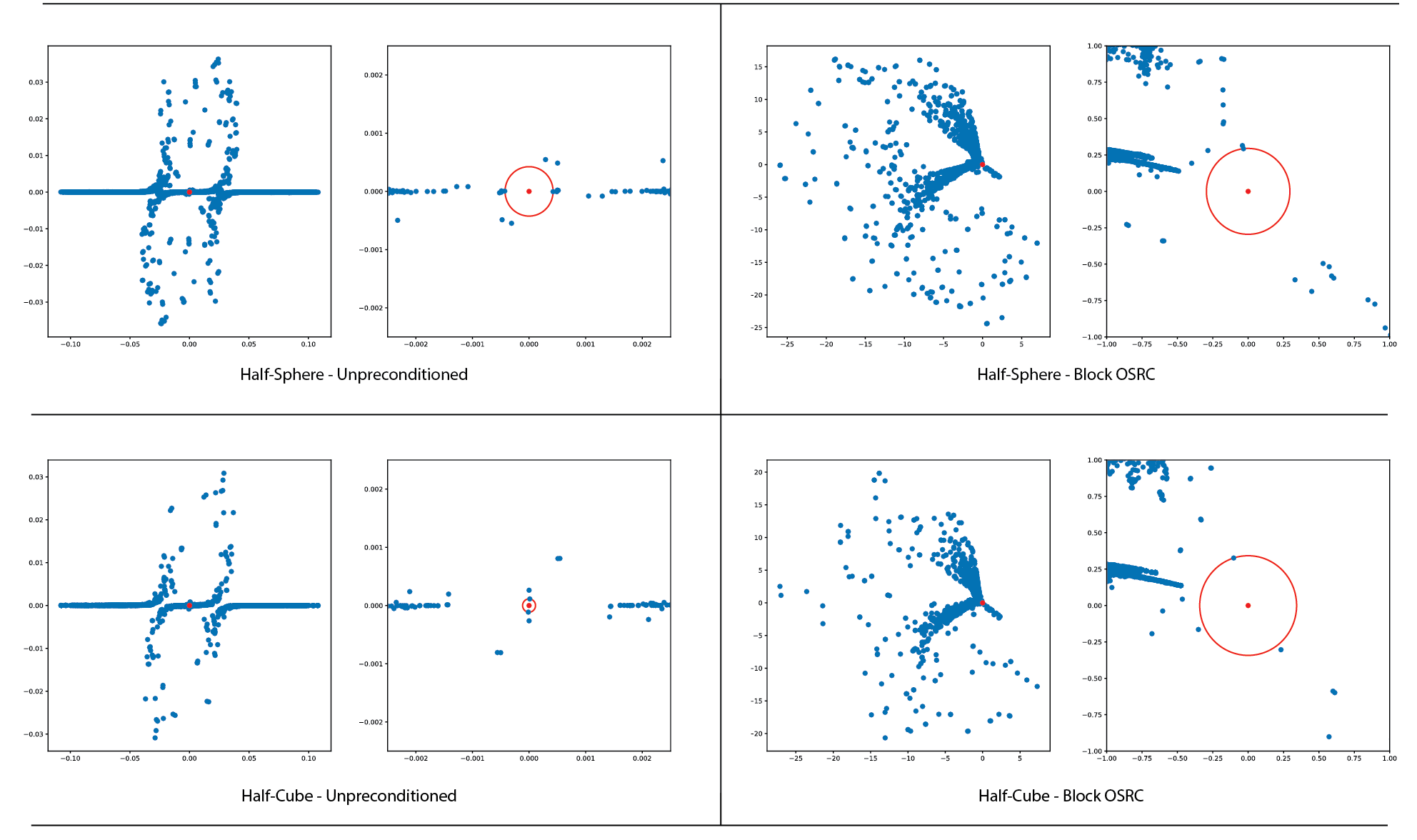}
\caption{MTF for $M=2$. Eigenvalues distribution for the local MTF (left) and block-OSRC preconditioned system (right).  We show results for the half-sphere (top) and the half-cube (bottom).}
\label{fig:eigvals}       
\end{figure}

\subsection{Scattering by multiple concentric cuboids}\label{subsec:complex}
To complete the numerical experiments\footnote{The script for this problem was  included as a tutorial in bempp-cl.}, we introduce a composite object consisting in a partition into concentric, nested cuboids. To this end, we define a bounded scatterer $\Omega_S := \{ (x,y,z) \in (-1.7,1.7) \times (-1.7, 1.7) \times (0,1) \}$, which will be divided into $M$ concentric cuboids for $M \in \{1,\ldots,6\}$. 

For $M \in \{1, \ldots, 6\}$, we introduce the \emph{radius vector} $\brr^M = [r_0^M,\ldots,r_{M}^M]$ of size $M+1$. With this vector at hand, we define the segments:  
$$
l_i^M =(-r_{i+1}^M, -r_i^M) \cup (r_i^M, r_{i + 1}^M),
$$
with $i \in \{0,\ldots,M\}$. Each concentric cuboid is then set as:
$$
\Omega_i : =  \big{\{} (x,y,z) \in  l_i^M \times l_i^M \times (0,1) \big{\}} .
$$
\Cref{fig:complexCaseMeshes} shows the meshes for $M=2,4,6$, with $\brr$ highlighted in dark blue. For instance, for $M=2$ (left subfigure), $\brr^2 = [0,0.5, 1.7]$. The definition of each radius vector $\brr^M$ for $M \in \{1, \ldots, 6\} $ is provided in Fig.~\ref{fig:ComplexOverview}(a).

\begin{figure}[htb!]
    \centering
    \begin{minipage}{0.32\linewidth}
\begin{figure}[H]
\includegraphics[width=1\linewidth]{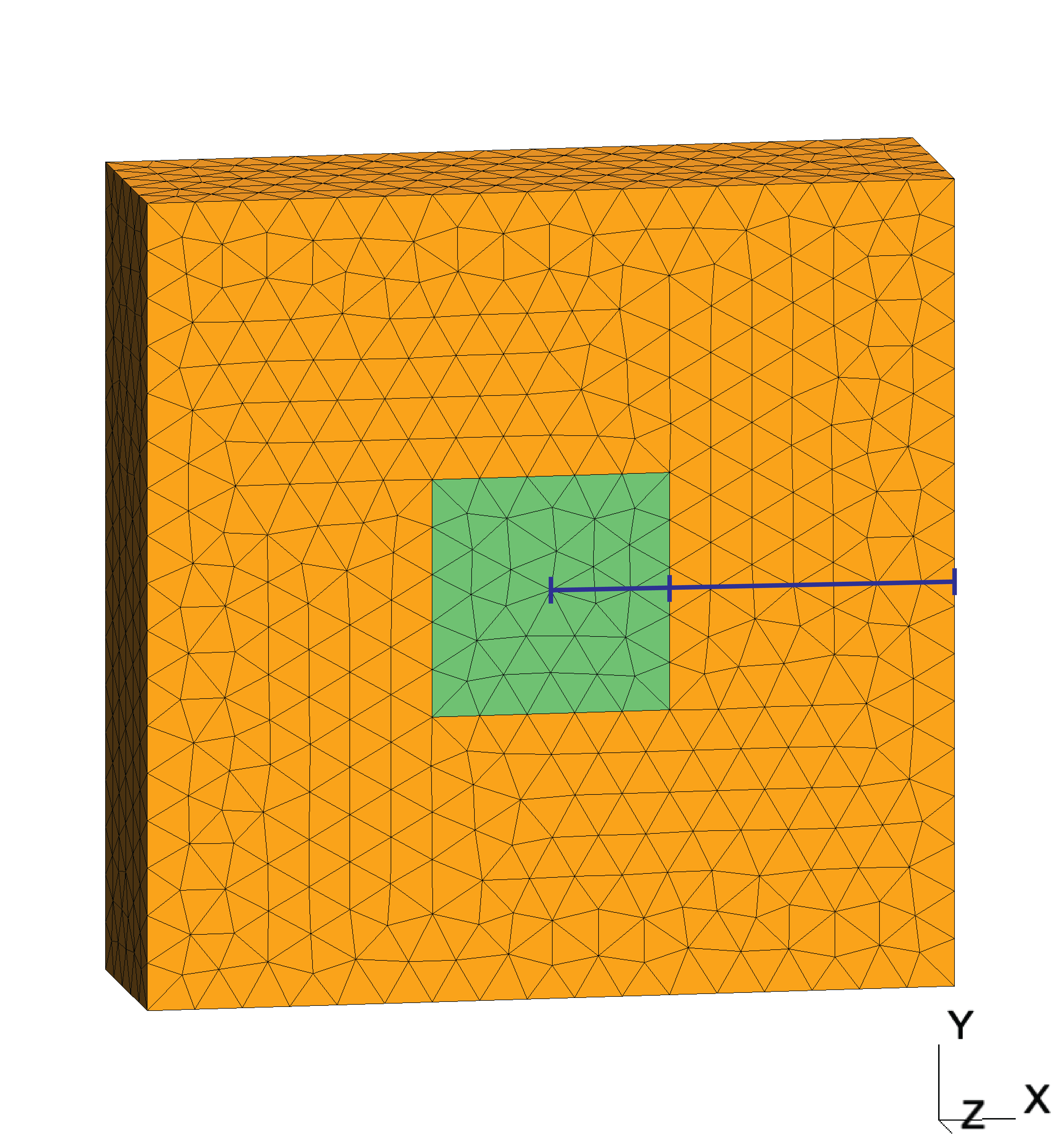}
\caption*{Mesh for $M=2$.}
\end{figure}
    \end{minipage}
    \begin{minipage}{0.32\linewidth}
    \begin{figure}[H]
\includegraphics[width=1\linewidth]{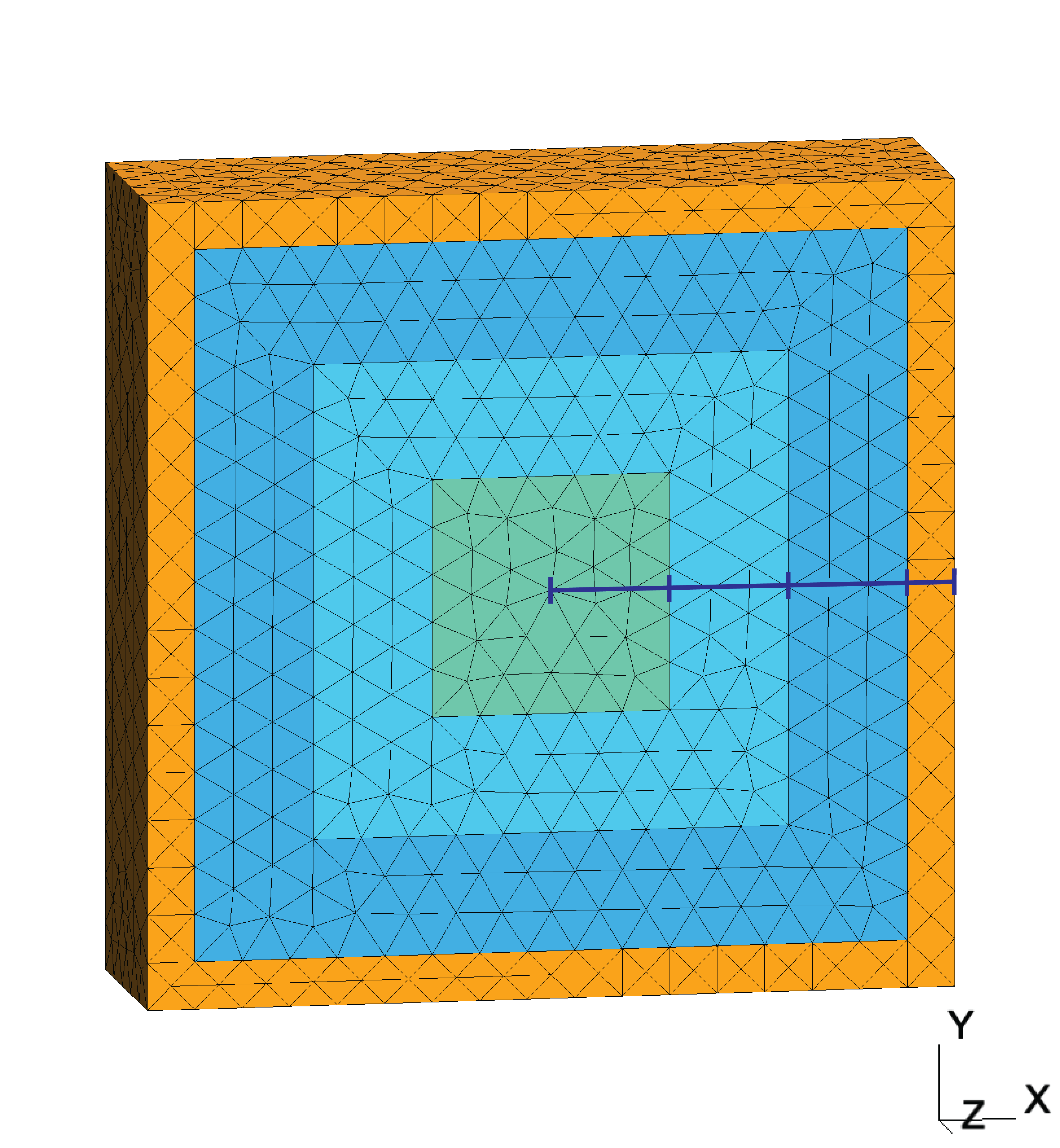}
%\vspace{-0.2cm}
\caption*{Mesh for $M=4$.}
\end{figure}
    \end{minipage}{}
    \begin{minipage}{0.32\linewidth}
    \begin{figure}[H]
\includegraphics[width=1\linewidth]{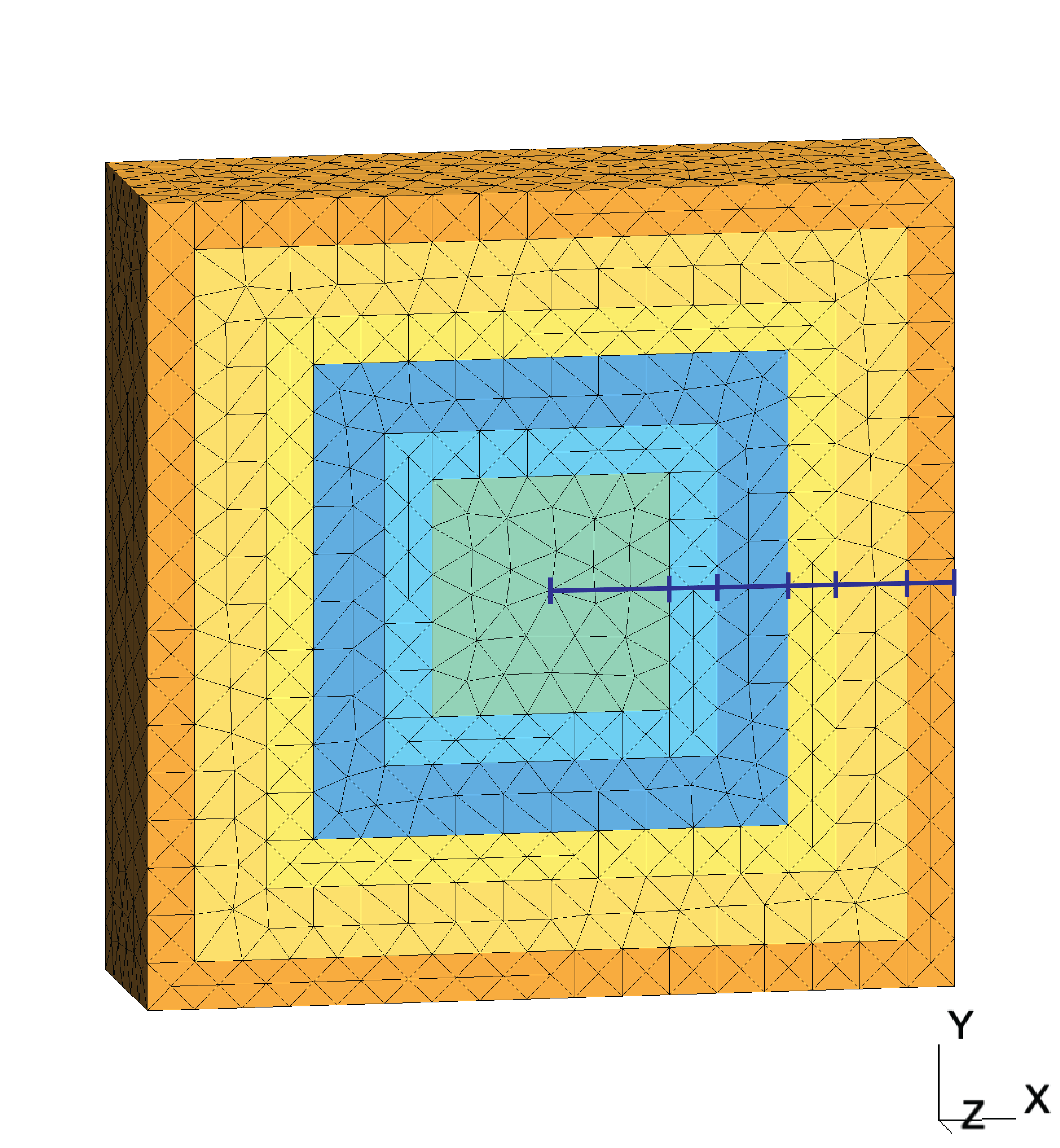}
%\vspace{-0.2cm}
\caption*{Mesh for $M=6$.}
\end{figure}
    \end{minipage}{}
\caption{Multiple concentric cuboids. Meshes for $M=2,4,6$ corresponding to a precision of $r=10$. For clarity, the segments representing the coordinates for $\brr^M$ are shown in dark blue. Refer also to Fig.~\ref{fig:ComplexOverview}(a).}
\label{fig:complexCaseMeshes}
\end{figure}

In order to study the local MTF as a domain decomposition method, we give the same material properties to all $\Omega_i$, for $i=1, \ldots, M$, and compare solutions for increasing $M$. Specifically, we employ the parameters of Case A: $k_0=3.0$ and $k_i=4.3$ for $i=1,\ldots,M$. Furthermore, we set $\eps_{r,i} = 2$ and $\mu_{r,i}=1$ for $i\in \{1,\ldots,M\}$. We apply block-OSRC preconditioning of Type 2 \cite[Section 3.5]{fierro2023osrc}, which demonstrated similar convergence to standard preconditioning (of Type 1) for a twofold decrease in memory and solver times approximately, in line with the bi-parametric operator preconditioning principles, which state that GMRES is provably robust to compression \cite{ESCAPILINCHAUSPE2021220}. We use a reference solution obtained by applying the MTF with $M=1$, a precision of $r=30$, and a GMRES tolerance of $10^{-5}$, resulting in a linear system with $N=108$,$432$ that converges after 365 iterations only when preconditioned. Indeed, the unpreconditioned formulation does not converge and yields a relative residual of only $0.63$ at step 365.

\iffalse
\begin{figure}[htb!]
    \centering
    \begin{minipage}{0.49\linewidth}
\begin{figure}[H]
\includegraphics[width=1\linewidth]{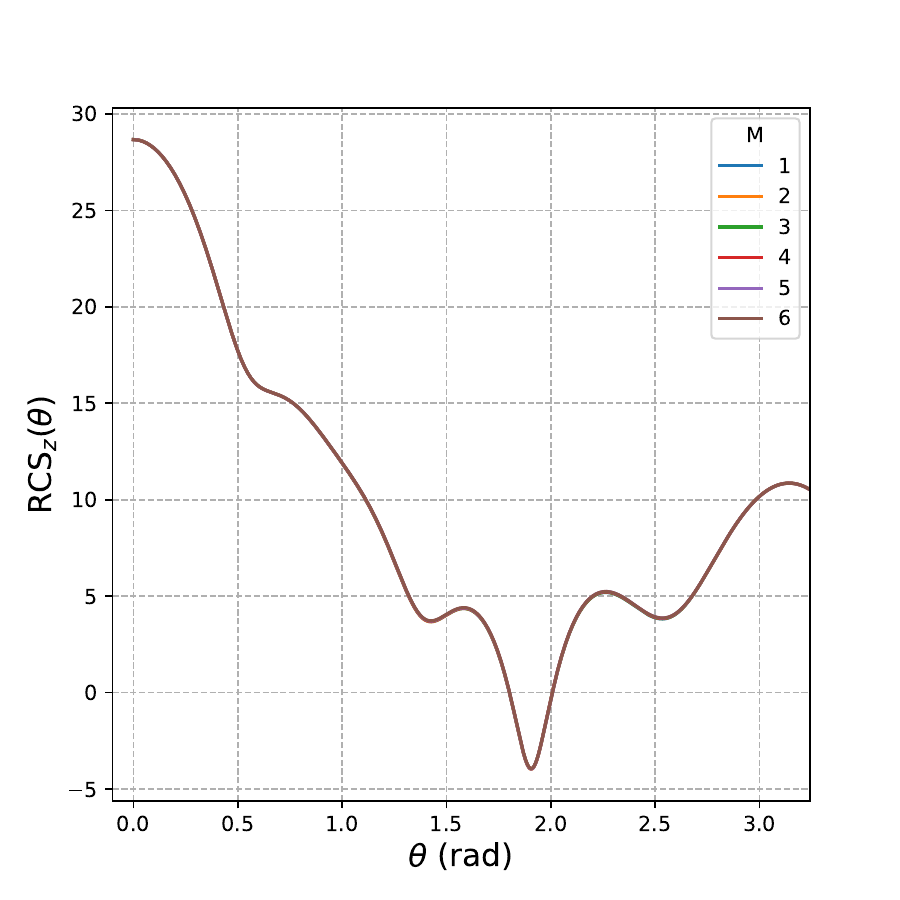}
\caption*{(a)}
\end{figure}
\begin{figure}[H]
\includegraphics[width=1\linewidth]{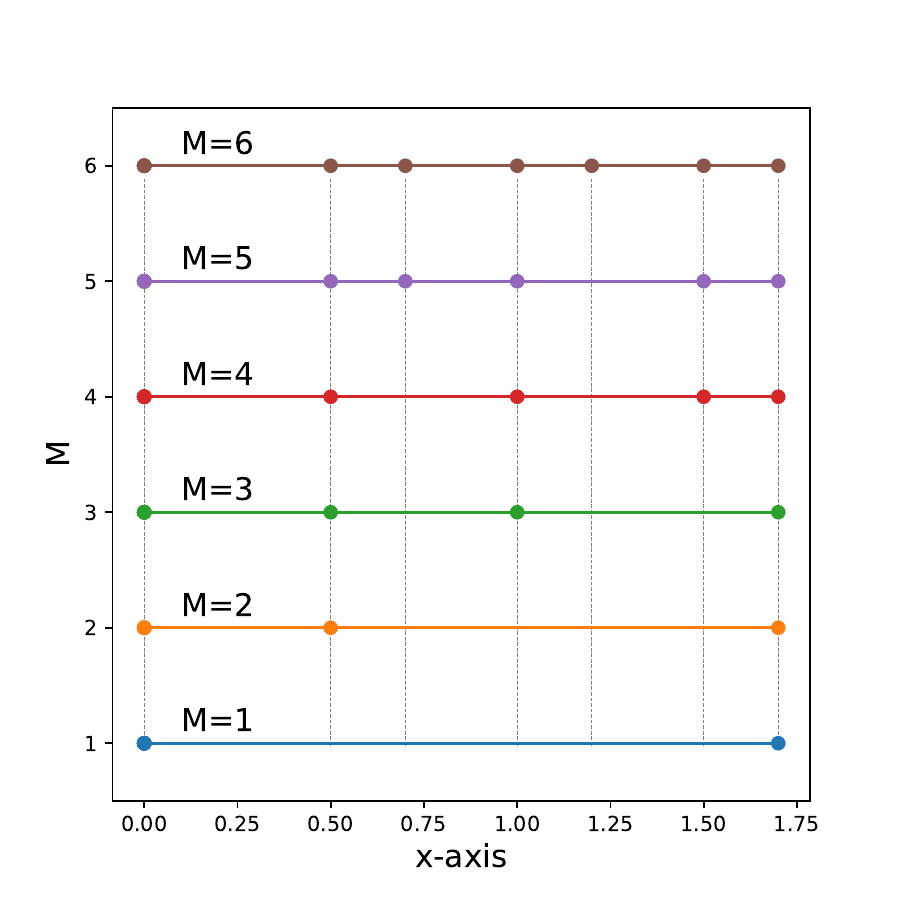}
%\vspace{-0.2cm}
\caption*{(c)}
\end{figure}
    \end{minipage}
    \begin{minipage}{0.49\linewidth}
\begin{figure}[H]
\includegraphics[width=1\linewidth]{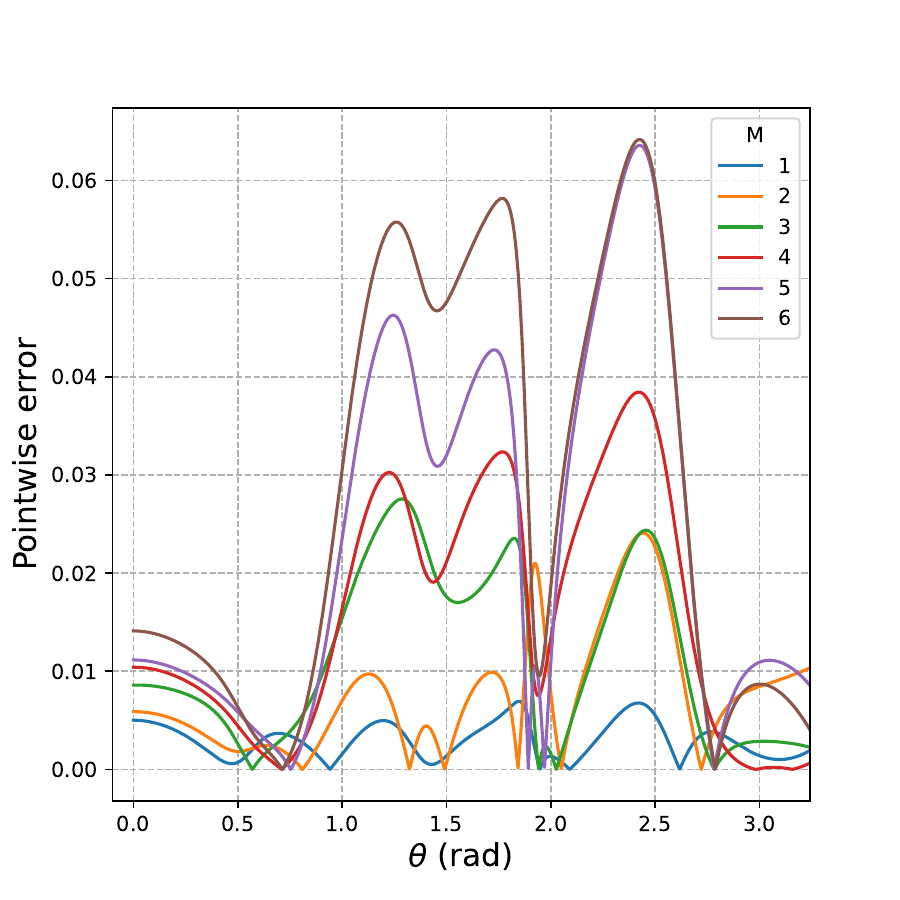}
\caption*{(b)}
\end{figure}
\begin{figure}[H]
\includegraphics[width=1\linewidth]{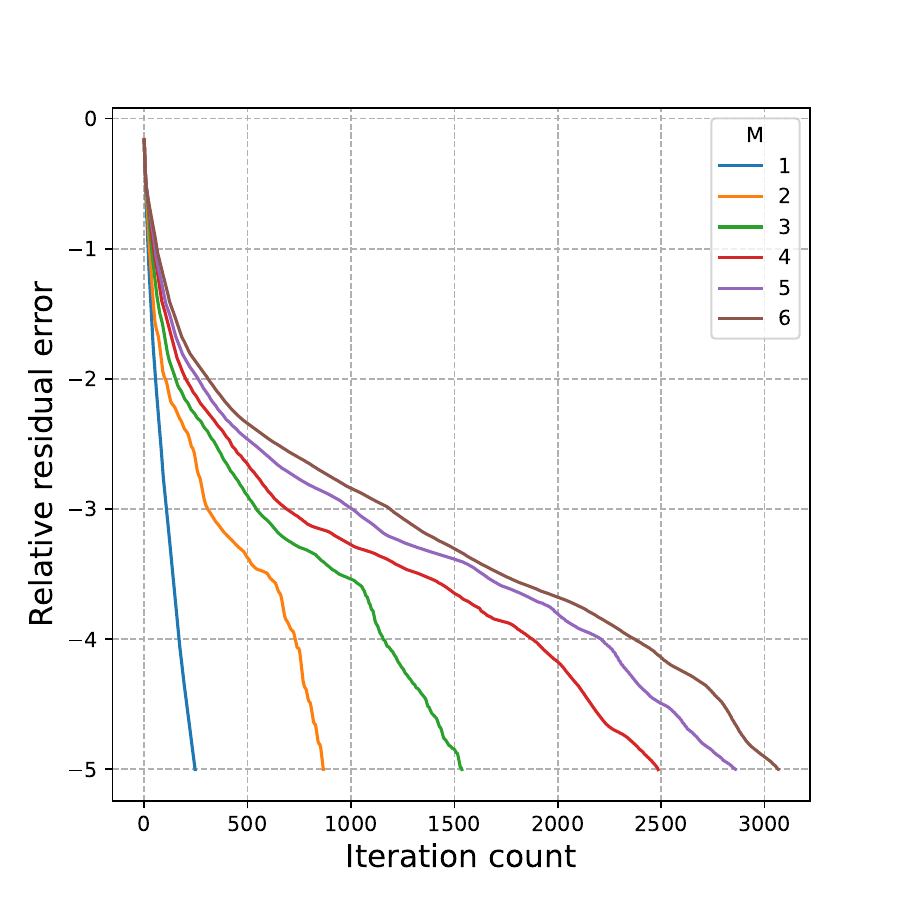}
%\vspace{-0.2cm}
\caption*{(d)}
\end{figure}
    \end{minipage}
\caption{Multiple concentric cuboids. Overview of the results for $M\in \{1,\ldots,6\}$. (a) RCS$_z(\theta)$. (b) Absolute difference between RCS$_z(\theta)$ and the reference solution. (c) Description of $\brr^M$. (d) Convergence results for GMRES.}
\label{fig:ComplexOverview}
\end{figure}
\fi 
\begin{figure}[htb!]
    \centering
    \begin{minipage}{0.49\linewidth}
\begin{figure}[H]
\includegraphics[width=1\linewidth]{figs/r_list.pdf}
\caption*{(a)}
\end{figure}
\begin{figure}[H]
\includegraphics[width=1\linewidth]{figs/FFerrorcomplex.pdf}
%\vspace{-0.2cm}
\caption*{(c)}
\end{figure}
    \end{minipage}
    \begin{minipage}{0.49\linewidth}
\begin{figure}[H]
\includegraphics[width=1\linewidth]{figs/FFcomplex.pdf}
\caption*{(b)}
\end{figure}
\begin{figure}[H]
\includegraphics[width=1\linewidth]{figs/GMREScomplex.pdf}
%\vspace{-0.2cm}
\caption*{(d)}
\end{figure}
    \end{minipage}
\caption{Multiple concentric cuboids. Overview of the results for $M\in \{1,\ldots,6\}$. (a) Description of $\brr^M$. (b) RCS$_z(\theta)$. (c) Absolute difference between RCS$_z(\theta)$ and the reference solution. (d) Convergence results for GMRES.}
\label{fig:ComplexOverview}
\end{figure}

\Cref{fig:ComplexOverview} and \Cref{tab:summaryFinal} provide an overview of the results for $M \in \{1, \ldots, 6\}$. In  Fig.~\ref{fig:ComplexOverview}(a), we recall the expression for $\brr^M$. Fig.~\ref{fig:ComplexOverview}(b) plots RCS$_z(\theta)$, showing that the solutions remain highly similar as $M$ increases, with errors reported in the last column of \Cref{tab:summaryFinal}. Fig.~\ref{fig:ComplexOverview}(c) depicts the absolute difference between RCS$_z(\theta)$ and the reference solution for all values of $M$, demonstrating that the error remains small. This error grows at a rate of $\mathcal{O}(M^{1.30})$ and sub-linearly at $\mathcal{O}(M^{0.82})$ when normalized by the number of dofs. These findings align with the bounds found for disjoint scatterers \cite{HJH18,MJP24}. We argue that since changes in the scatterer topology affect the number of interfaces, a fair comparison to assess the impact of increasing the number of subdomains should be performed for a normalized number of dofs. Finally, Fig.~\ref{fig:ComplexOverview}(d) illustrates the GMRES convergence behavior, showing an increase in the final iteration count as $M$ grows.

\begin{table}[htb!]
\renewcommand\arraystretch{1.4}
\begin{center}
\footnotesize

\begin{tabular}{|c|c|c|c|c|c|c|c|c|c|c|c|} \hline
   \multicolumn{2}{|c|}{Case} & GMRES & \multicolumn{5}{c|}{Execution times (s)} & \multicolumn{1}{c|}{Memory requirements}& Error \\ \hline
 $M$&  $N$&    $n_\text{GMRES}$   & $t^\text{solver}$& $t^\text{assembly}_{\bP}$&   $t^\text{assembly}_{\bM}$ &  $t^\text{matvec}_{\bP}$ &$t^\text{matvec}_{\bM}$& $\text{nnz}_{\bM} \times 10^{-6}$  &$[\text{RCS}_z]_{L^2([0,\pi])}$\\ \hline\hline 
 1 & 13,440& 247 & 22.9 & 4.1 & 7.2 &0.78 & 7.3 & 205.81 & 1.57e-04 \\ \hline
 2 & 15,120&867 & 105.2 & 5.1 & 7.7 &1.1 & 7.9 &250.48 & 4.32 e-04 \\ \hline
 3 & 18,252 &1,537 & 244.1 & 18.2 & 7.4 & 1.4 & 8.6 & 256.55 & 6.52 e-04 \\ \hline 
 4 &  24,012 &2,486 & 601.4 & 19.3 & 8.0 & 2.0 & 11.5 &  322.30&  9.12e-04  \\ \hline 
 5 & 26,736   &2,860& 806.7 & 9.6 &  8.1 &  2.6  & 12.6&  339.41& 1.37e-03\\ \hline
 6 &  31,380        & 3,067 & 1,052.3      &   35.3    & 8.4&  3.0   & 15.6  &373.33 & 1.62e-03\\ \hline\hline
  eoc& $M^{0.48}$&$M^{1.44}$&$M^{2.20}$&$M^{1.02}$ & $M^{1.10}$&$M^{0.44}$ & $M^{0.41}$& $M^{0.33}$& $M^{1.30}$\\ \hline \hline
    s-eoc & - &$M^{0.96}$&$M^{1.72}$&$M^{0.54}$ & $M^{0.62}$&$M^{-0.04}$ & $M^{-0.07}$& $M^{-0.15}$& $M^{0.82}$\\ \hline 
\end{tabular}
\end{center} 

\caption{Multiple concentric cuboids. Numerical results for growing numbers of $M$. The bottom rows provide rates with respect to the number of subdomains with and without normalizing by total degrees of freedom.}
\label{tab:summaryFinal} 
\end{table}  

As mentioned, the illustrations in \Cref{fig:ComplexOverview} are complemented by \Cref{tab:summaryFinal}, which presents various metrics for each $M$. The second to last row highlights the expected order of convergence (eoc) for each metric with respect to $M$, represented as the best interpolation by a polynomial of the form $C M^\alpha$, where $C$ is a constant and $\alpha$ is the eoc. Likewise, the last row shows the \emph{standardized} eoc (s-eoc) for each metric, referred to as s-form, and scaled by $N$, to study it for the same number of dofs and provide fair comparisons, as explained before. We note that $N$, the size of the local MTF, grows approximately as $\sqrt{M}$, ranging between 13,440 and 31,380 dofs. In addition, we report $n_\text{GMRES}$, the number of iterations for GMRES to reach the relative tolerance of $10^{-5}$. The iteration count increases approximately as $M^{1.44}$ (resp.~$M^{0.96}$ in s-form), ranging between 247 and 2,067 iterations. Although convergence slows with increasing $M$, it is noteworthy that the method achieves convergence to a low tolerance for six subdomains. The growth in iteration numbers directly impacts the solver times, $t^\text{solver}$, which ranges between 22.9 seconds and 1,052.2 seconds (approximately 18 minutes). The solver times grow as $M^{2.20}$ (resp.~$M^{1.72}$ in s-form), which is close to quadratic growth. However, the assembly times for both the preconditioner $t^\text{assembly}_{\bP}$ and the local MTF matrix $t^\text{assembly}_{\bM}$ grow approximately linearly with $M$. The time required to perform $100$ matrix-vector products, referred to as $t^\text{matvec}_\star$, $\star \in \bP, \bM$ increases slowly with $M$. The memory and time requirements for the preconditioner are relatively low compared to those for $\bM$. Finally, the memory requirements for $\bM$ grow as $M^{0.33}$.%, while the error for RCS$_z$ grows as $M^{1.30}$ %\cj{didn't we already commented on this?}. Despite this, the relative error remains controlled, located between $10^{-4}$ and $10^{-3}$. \todo{P:Improve the discussion with Pintarelli, Pinto? \cj{I would say "this is consistent with..." if our results are indeed consistent}}

%To complement our analysis, we run some particular parameters for case $M=5$. If we set $\mu_r = [1, 2, 4, 8, 16,32,32]$, yielding  $n_\text{GMRES}= 2860 $ and $t^\text{solve} =802.5 $. The solver is not affected by the relative permeability parameters. Now, we set $\eps_r =$. Finally, last case, with different $k$.
Based on the above observations, one can state the following key insights:
\begin{enumerate}
    \item The local MTF successfully solves increasingly complex problems, namely two half-spheres, two half-cubes, and multiple concentric cuboids, within a domain decomposition framework while maintaining controlled error; 
    \item Solver times are the limiting factor, as they outweigh both assembly times and memory requirements;
    \item While both Type 1 and Type 2 block-OSRC preconditioners were relatively inexpensive, there is potential to explore trade-offs between cost and efficiency by investigating more preconditioners. 
    \end{enumerate}
    
    According to bi-parametric operator preconditioning theory, a strong continuous preconditioner performs well and offers opportunities for significant compression \cite{ESCAPILINCHAUSPE2021220}. In this case, a denser continuous OSRC could be explored, though it does not guarantee $M$-independent iteration counts. In contrast, Calderón preconditioning typically yields lower iteration counts, but can be prohibitively costly. Future work should focus on achieving a compromise between these approaches.

\section{Conclusions}
\label{sec:conclusions}
This work demonstrates significant promise, as the local MTF has successfully addressed EM scattering by complex objects with multiple homogeneous parameters. These findings underscore the versatility of the current approach, which can efficiently manage diverse and intricate boundary conditions, paving the way for a wide range of new large-scale industrial applications within the EM spectrum. The availability of MTF routines within the open-source bempp-cl framework is expected to drive a surge of experiments within the EM community, fostering new applications and discoveries.

However, certain limitations persist, particularly in terms of computational efficiency and the handling of large-scale systems. To overcome these challenges, future research will focus on implementing the Fast Multipole Method (FMM) \cite{darve2000fast} and adapting its routines to the segments and normal swaps proposed in this work. Additionally, preconditioning strategies will be developed to improve the convergence rates of iterative solvers.

Another key avenue for future exploration involves enhancing the efficient handling of block operators, particularly in the parallelization of matrix-vector products. Furthermore, applications to shape uncertainty quantification \cite{escapil2024shape} in heterogeneous domains will also be investigated.

\section*{Acknowledgement}
The authors thank the support of the following grants: ANID Fondecyt Regular 1231112 and ECOS ANID ECOS230032.

\appendix
\section{\texorpdfstring{Proof of \Cref{prop:calderon_identities}}{Proof of Proposition 1}}
\label{app:appendixA}
For $i\in \{1,\ldots,M\}$, \eqref{eq:cald_temp} ensures the identity as $\bu^\text{sc}_i=\bu_i$. For $i=0$, since the incident field is assumed to satisfy the Maxwell homogeneous equation, one has the exterior Calderón identity in $\Omega_0^c$:
\be \label{eq:cald2}
\left(\frac{\bmI_0}{2}  +   \hbmA_0 \right) \bu_0^\text{inc} = \bu_0^\text{inc} .
\ee
Adding \eqref{eq:cald1} and \eqref{eq:cald2} gives
\be 
 \frac{1}{2}\bu_0^\text{sc} - \hbmA_0 \bu_0^\text{sc}  + \frac{1}{2}\bu_0^\text{inc} + \hbmA_0 \bu_0^\text{inc} = \bu_0 ,
 \ee 
i.e.
\be 
\hbmA_0 ( 2\bu_0^\text{inc} - \bu_0) - \frac{1}{2} \bu_0 = 0.
\ee 
Hence, we retrieve
\be 
 \hbmA_0 (  \bu_0 - 2\bu_0^\text{inc}) + \frac{1}{2} \bu_0 = 0,
\ee 
so that
\be 
 \hbmA_0   \bu_0   +  \frac{1}{2} \bu_0 = 2 \hbmA_0 \bu_0^\text{inc}  = \bu_0^\text{inc}, 
\ee 
the latter giving the desired final result.\qed

\bibliography{references}

%% else use the following coding to input the bibitems directly in the
%% TeX file.

%\begin{thebibliography}{00}

%% \bibitem{label}
%% Text of bibliographic item

%\bibitem{}

%\end{thebibliography}
\end{document}